\DeclareMathOperator{\ddim}{ddim}
\DeclareMathOperator{\diam}{diam}
\newcommand{\veps}{\varepsilon}
\newcommand{\etal}{{\it{et al.\ }}}
\newtheorem{theorem}{Theorem}[section]
\newtheorem{lemma}[theorem]{Lemma}
\newtheorem{corollary}[theorem]{Corollary}
\newtheorem{fact}[theorem]{Fact}
\title{Approximate nearest neighbor search for $\ell_p$-spaces $(2<p<\infty)$ via embeddings}
\author{
Yair Bartal
\footnote{Hebrew University. Work supported in part by an Israel Science Foundation grant \#1609/11.
Email: \texttt{yair@cs.huji.ac.il} }
\and
Lee-Ad Gottlieb
\footnote{
Ariel University. Work supported in part by an Israel Science Foundation grant \#755/15.
Email: \texttt{leead@ariel.ac.il} }
}
\begin{document}

\maketitle

\begin{abstract}
While the problem of approximate nearest neighbor search has been well-studied 
for Euclidean space and $\ell_1$, few non-trivial algorithms are known for 
$\ell_p$ when $2<p<\infty$. In this paper, we revisit this fundamental problem and 
present approximate nearest-neighbor search algorithms which give the first 
non-trivial approximation factor guarantees in this setting.
\end{abstract}

\section{Introduction}

Nearest neighbor search (NNS) is one of the basic operations computed 
on data sets comprising numeric vectors, i.e.\ points. The problem asks to preprocess 
a $d$-dimensional set $V$ of $n=|V|$ vectors residing in a certain space $M$, so that 
given a new query point 
$q \in  M$, a point nearest to $q$ in $V$ can be located efficiently. This problem has 
applications in data mining, database queries and related fields.

When the ambient space $M$ is a high-dimensional $\ell_p$-space,\footnote{This is
a space equipped with a Minkowski norm, which defines the distance between two
$d$-dimensional vectors $x,y$ as $\|x-y\|_p = (\sum_{i=1}^d |x_i-y_i|^p)^{1/p}$.}
NNS may require significant 
computation time, and this is due to the inherent complexity of the metric. 
For example, for Euclidean vectors ($\ell_2$-space), Clarkson \cite{C88} gave
an $O(n^{\lceil d/2 \rceil (1 + \veps)})$ size data structure that answers exact
NNS queries in $O(\log n)$ time (with constant factors in the bounds depending on 
constant $\veps>0$), and claimed that the exponential dependence on $d$ is a manifestation
of Bellman's \cite{Be61} ``curse of dimensionality.'' 
This has led researchers to consider the $c$-approximate
nearest neighbor problem (ANN), where the goal is to find a point in $V$ 
whose distance to $q$ is within a factor $c$ of the 
distance to $q$'s true nearest neighbor in $V$.
In the Euclidean setting, celebrated results 
of Kushilevitz, Ostrovsky and Rabani \cite{KOR98} (see also \cite{OR00,OR02}) 
and Indyk and Motwani \cite{IM98,HIM12}
achieved polynomial-size data structures which return a $(1+\veps)$-ANN
in query time polynomial in $d \log n$ (when $\veps > 0$ is any constant).
These results can be extended to all $\ell_p$ with $1 \le p \le 2$.

However, the more difficult regime of $p>2$ is significantly less well understood. 
Recalling that for any vector $v$ and $p>2$, 
$d^{\frac{1}{p} - \frac{1}{2}} \|v\|_2 \le \|v\|_p \le \|v\|_2$, 
we conclude that simply running an $\ell_2$ ANN algorithm on $V \subset \ell_p$
(that is, treating $V$ as if it resided in $\ell_2$) will return an
$O(d^{\frac{1}{2} - \frac{1}{p}})$-ANN in polylog time.
If we allow exponential space, then a $(1+\veps)$-ANN can be found in query time $O(d \log n)$ 
and space $\veps^{-O(d)}n$ by utilizing an approximate Voronoi diagram \cite{HIM12,AM02}.
(These structures were developed for Euclidean spaces, but apply to all $\ell_p$, $p \ge 1$ as well \cite{HK15}.)
For $\ell_\infty$, Indyk \cite{In98} gave a polynomial-size structure which answers
$O(\log \log d)$-approximate queries in $d \log^{O(1)} n$ query time, 
and remarkably there are indications that this bound may be optimal \cite{ACP08}. 
Since for any vector $v$ we have $\|v\|_\infty \le \|v\|_p \le d^{1/p} \|v\|_\infty$,
Indyk's $\ell_\infty$ structure gives a $O(d^{1/p} \log\log d)$-ANN algorithm for all $p>2$,
and in particular an $O(\log\log d)$-ANN whenever $p = \Omega(\log d)$.

\paragraph{Our contribution.}
We revisit the problem of ANN in $\ell_p$ spaces for $2 < p < \infty$, and give improvements
over what was previously known. 
Note that the $\ell_p$-norm for this regime finds application in fields such as
image resolution \cite{FT04,FRT11}, time-series comparison \cite{YYW11}, and $k$-means clustering \cite{CM12}.
We are interested in polynomial-size structures that have query
time polynomial in $d \log n$. Hence we shall make the simplifying assumptions that 
$d = \omega (\log n)$ and $d = n^{o(1)}$: If 
$d = O(\log n)$ then approximate Voronoi diagrams may be used, and if
$d = n^{\Omega(1)}$ then comparing the query point $q$ to each point in $V$ 
in a brute-force manner can be done in time
$O(dn) = d^{O(1)}$.\footnote{We recall also that there exists an oblivious mapping for 
all $\ell_p$ that embeds $\ell_p^m$ into $\ell_p^d$
for $d = {n \choose 2}$ dimensions \cite{Fi88,Ba90}.}

Our first result is an extension of Indyk's $\ell_{\infty}$ algorithm to
smaller values of $p$. Exploiting the max-stability property of Fr\'{e}chet random variables,
we give in Section \ref{sec:p-logn} a randomized embedding from $\ell_p$ into $\ell_\infty$
that is non-contractive and has expansion $O(\log^{1/p}n)$.
This means that points distant from query point $q$ remain far, while the distance from $q$
to its nearest neighbor increases by at most a factor of $O(\log^{1/p}n)$. 
We then run Indyk's algorithm on the embedded space, and the result must be a
$O ( \log\log d \log^{1/p} n )$-ANN in the origin space.
We refine this technique in Section \ref{sec:p-ddim} to obtain better bounds
for spaces with low doubling dimension.

Having extended the range in which Indyk's $\ell_\infty$ algorithm is applicable, we
proceed to introduce an embedding which greatly extends the range for which the $\ell_2$ 
algorithms are applicable. In Section \ref{sec:mazur}, we 
introduce the Mazur map as an algorithmic tool. This mapping allows us to embed $\ell_p$ into
$\ell_2$, and we then solve ANN in the embedded space. Although the Mazur map induces 
distortion dependent on the diameter of the set, thereby confounding the ANN search, 
we show that the mapping can be applied to small low-diameter subproblems.
Our final result is a polynomial-size structure which answers $2^{O(p)}$-approximate queries in time 
polynomial in 
$d \log n$ (Theorem \ref{thm:nns-big}). This yields non-trivial results for $p = o(\log d)$. 
Comparing this result with the one above: 
\begin{itemize}
\item
When $p = O(\sqrt{\log\log n})$,
the $2^{O(p)}$-ANN algorithm is best. 
\item
When
$p = \Omega(\sqrt{\log\log n})$ the 
$O ( \log\log d \log^{1/p} n )$-ANN algorithm
is best.
\end{itemize}
Note that the worst case is when 
$p = \Theta(\sqrt{\log\log n})$, where 
the approximation ratio is 
$2^{O(p)} = 2^{O(\sqrt{\log\log n})}$.
As we have assumed 
$d = \omega (\log n)$, the above bound is $2^{o(\sqrt{\log d})}$, much better
than the previously known $d^{O(1)}$-factor approximations.

\subsection{Related work}
For Euclidean space, Chan \cite{Ch98} gave a deterministic construction which gives 
an $O(d^{3/2})$-ANN, in time $O(d^2 \log n)$ and using polynomial space (see also \cite{Be93}). 
For $\ell_p$, Neylon \cite{Ne10} gave an
$O(d)$-ANN structure which runs in $O(d^2 \log n)$ time and uses $\tilde{O}(dn)$ space.

Stoev \etal \cite{SHKT07} and Andoni \cite{An13} both used random variables 
with max-stability or min-stability to estimate the $p$-th moment of a vector, or of the 
difference between two vectors. 
Farag\'{o} \cite{Fa11} presented an elegant oblivious embedding 
from $\ell_p^d$ to $\ell_\infty^{2^{O(d)}}$
with arbitrarily low distortion, and the existence of an embedding with these 
properties had been alluded to in Indyk's survey \cite{In01}.
We observe that one may utilize Farago's embedding to map $\ell_p$ into $\ell_\infty$ and then 
compute the nearest neighbor in the embedded space using Indyk's $\ell_{\infty}$ structure. 
This in fact yields an $O(\log d)$-ANN, 
but in time and space exponential in $d$, so approximate Voronoi diagrams are better for
this problem.

For ANN in general metric spaces, Krauthgamer and Lee \cite{KL04} showed that the doubling dimension
can be used to control the search runtime: For a metric point set $S$, they constructed 
a polynomial-size structure which finds an $O(1)$-ANN in time $2^{O(\ddim(S))} \log \Delta$,
where $\Delta = \Delta(S)$ is the {\em aspect ratio} of $S$, the ratio between the 
maximum and minimum inter-point distances in $S$. The space requirements of this data
structure were later improved by Beygelzimer \etal \cite{BKL06}. Har-Peled and Mendel \cite{HM06} 
and Cole and Gottlieb \cite{CG06} showed how to replace the dependence on $\log \Delta$ with
dependence on $\log n$.
Other related research on nearest neighbor searches have focused on various
assumptions concerning the metric space. Clarkson \cite{Cl99} made
assumptions concerning the probability distribution from which the database and query points 
are drawn, and developed two randomized data structures for exact nearest
neighbor. Karger and Ruhl \cite{KR02}
introduced the notion of growth-constrained metrics (elsewhere called the
KR-dimension) which is a weaker notion than that of the doubling
dimension. A survey of proximity searches in metric space appeared in
\cite{CNBYM01}.

Subsequent to the dissemination of the results in this paper, we were advised of a manuscript of
Naor and Rabani \cite{NR06} (mentioned in \cite[Remark 4.2]{N14}) which gives a similar 
$2^{O(p)}$-ANN algorithm for $p>2$, also utilizing the Mazur map.
In personal communication, Assaf Naor broached the question of better dependence on 
$p$ in the $2^{O(p)}$ approximation bound of Theorem \ref{thm:nns-big}. He noted that all
{\em uniform} embeddings of $\ell_p$ ($p>2$) into $\ell_2$ 
(such as the Mazur map) possess distortion exponential in $p$
\cite[Lemma 5.2]{N14}.
Non-uniform embeddings of $\ell_p$ into $\ell_2$ may possess better distortion bounds.

\subsection{Preliminaries}\label{sec:prelim}

\paragraph{Embeddings and metric transforms.}
A much celebrated result
for dimension reduction is the well-known $l_2$ flattening lemma of Johnson
and Lindenstrauss \cite{JL84}: For every $n$-point subset of $l_2$ and every
$0<\veps<1$, there exists a mapping into $l_2^k$ that preserves all inter-point
distances in the set within factor $1+\veps$, with target dimension
$k = O(\veps^{-2} \log n)$.

Following Batu \etal \cite{BES06}, we define an {\em oblivious} embedding to be an embedding
which can be computed for any point of a database or query set,
without knowledge of any other point in these sets. (This differs slightly from
the definition put forth by Indyk and Naor \cite{IN07}.) Familiar oblivious embeddings 
include standard implementations of the Johnson-Lindenstrauss transform 
for $l_2$ \cite{JL84}, the dimension reduction mapping of Ostrovsky and Rabani \cite{OR02} 
for the Hamming cube, and the embedding of Johnson and Schechtman 
\cite{JS82} for $\ell_p$, $p \le 2$.

An embedding of $X$ into $Y$ with {\em distortion} $D$ is a mapping $f:X\rightarrow Y$ 
such that for all $x,y \in X$,
$$ 1 \leq c \cdot \frac{d_Y(f(x),f(y))}{d_X(x,y)}   \leq D,$$
where $c$ is any scaling constant.
An embedding is {\em non-contractive} if for all $x,y \in X$,
$$ \frac{d_Y(f(x),f(y))}{d_X(x,y)}   \geq 1.$$
It is {\em non-expansive} (or {\em Lipschitz}) if for all $x,y \in X$,
$$ \frac{d_Y(f(x),f(y))}{d_X(x,y)}   \leq 1.$$

\paragraph{Doubling dimension.}
For a metric $M$, let $\lambda>0$
be the smallest value such that every
ball in $M$ can be covered by $\lambda$ balls of half the radius.
$\lambda$ is the {\em doubling constant} of $M$, and
the {\em doubling dimension} of $M$ is $\ddim(M)=\log_2\lambda$.
Then clearly $\ddim(M) = O(\log n)$.
Note that while a low $\ell_p$ vector dimension implies a low doubling 
dimension -- simple volume arguments demonstrate that $\ell_p$ metrics 
($p \ge 1$) of dimension $d$ have doubling dimension $O(d)$ -- 
low doubling dimension is strictly more general than low $\ell_p$ dimension.
We will often use the notation $\ddim$ when the ambient space is clear from context.
The following packing property can be shown (see for example \cite{KL04}):

\begin{lemma}
\label{lem:pack}
Suppose that $S\subset M$ has a minimum inter-point distance
$\alpha$, and let $\diam(S)$ be the diameter of $S$. Then
$$ |S| \leq \Big(\tfrac{2\diam(S)}{\alpha}\Big)^{\ddim(M)}. $$
\end{lemma}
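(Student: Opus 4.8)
The plan is a standard iterated‑packing argument: use the doubling property of $M$ to cover $S$ by a controlled number of balls so small that each one can contain at most a single point of $S$, and then simply count. First I would fix an arbitrary point $x\in S$; since every point of $S$ lies within distance $\diam(S)$ of $x$, the whole set $S$ is contained in the single ball $B(x,\diam(S))$. Next I would invoke the doubling property repeatedly: $B(x,\diam(S))$ is covered by $\lambda:=2^{\ddim(M)}$ balls of radius $\diam(S)/2$, each of those by $\lambda$ balls of radius $\diam(S)/4$, and so on, so that after $k$ rounds $S$ is covered by at most $\lambda^{k}=2^{k\,\ddim(M)}$ balls of radius $\diam(S)/2^{k}$.

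The second ingredient is the observation that once the covering balls have radius strictly below $\alpha/2$, each of them contains at most one point of $S$: two points lying in a common ball of radius $\rho<\alpha/2$ would, by the triangle inequality, be at mutual distance $<\alpha$, contradicting the hypothesis that the minimum inter‑point distance in $S$ equals $\alpha$. Accordingly I would take $k=\lceil\log_2(2\diam(S)/\alpha)\rceil$, which is the first round at which $\diam(S)/2^{k}<\alpha/2$, so that the number of covering balls is an upper bound on $|S|$. Combining the two ingredients gives $|S|\le\lambda^{k}=2^{k\,\ddim(M)}$, and simplifying the exponent yields the claimed bound $|S|\le(2\diam(S)/\alpha)^{\ddim(M)}$.

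I do not expect a genuine obstacle here; the argument is entirely the two ingredients above — iteration of the doubling property and the single‑point‑per‑small‑ball bound. The only care required is in the bookkeeping of the number of halving rounds (in particular, the ceiling in the choice of $k$ is absorbed into the leading constant of the stated bound, which is harmless since the quantity of real interest is the exponent $\ddim(M)$), and in noting that the statement is vacuous or trivial in the degenerate case $|S|\le 1$, where $\diam(S)=0$.
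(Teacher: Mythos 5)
The paper does not actually prove this lemma—it is stated as a known packing fact with a citation to \cite{KL04}—so there is no in-paper proof to compare against; your iterated-halving argument is indeed the standard one. However, the final arithmetic has a real gap which you flag but do not close. With $k=\lceil\log_2(2\diam(S)/\alpha)\rceil$ you get $2^k\ge 2\diam(S)/\alpha$, hence $\lambda^k=2^{k\,\ddim(M)}\ge(2\diam(S)/\alpha)^{\ddim(M)}$: the inequality runs the \emph{wrong} way, so ``simplifying the exponent'' does not yield the stated bound from $|S|\le\lambda^k$. Moreover, in the boundary case where $2\diam(S)/\alpha$ is an exact power of two, one has $\diam(S)/2^k=\alpha/2$ rather than strictly less, and a ball of radius exactly $\alpha/2$ can contain two points of $S$ at distance exactly $\alpha$; so the ``one point per ball'' step also fails there. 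What the argument honestly gives, taking $k$ to be the least integer with $2^k>2\diam(S)/\alpha$, is $|S|\le\lambda^k\le(4\diam(S)/\alpha)^{\ddim(M)}$, i.e.\ a factor $2^{\ddim(M)}$ worse than claimed.

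This slack is genuinely harmless for how the lemma is used in the paper (e.g.\ the bound $|W_i|\le 2^{(i+2)\ddim}$ inside Lemma~\ref{lem:net} would only shift to $2^{(i+3)\ddim}$, which is absorbed downstream), and different references state the packing lemma with different leading constants or a big-$O$ in the exponent precisely to paper over this. But as a proof of the lemma \emph{as literally stated}, your write-up asserts the exact constant $2$ without deriving it, and should instead conclude with the slightly weaker $(4\diam(S)/\alpha)^{\ddim(M)}$ or note explicitly that the constant in the base is not tight.
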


\paragraph{Nets and hierarchies.}
Given a point set $S$ residing in metric space $M$, 
$S' \subset S$ is a $\gamma$-net of $S$ if the minimum inter-point distance in 
$S'$ is at least $\gamma$, while the distance from every point of $S$ to its nearest neighbor in 
$S'$ is less than $\gamma$. Let $S$ have minimum inter-point distance 1.
A {\em hierarchy} is a series of $\lceil \log \Delta \rceil$ nets ($\Delta$ being the aspect ratio 
of $S$), where each net $S_i$ is a $2^i$-net of the previous net $S_{i-1}$.
The first (or bottom) net is $S_0 = S$,
and the last (or top) net $S_t$ contains a single point called the {\em root}.
For two points $u \in S_i$ and $v \in S_{i-1}$, if $d(u,v)<2^i$ then we say that $u$ {\em covers} $v$, and
this definition allows $v$ to have multiple covering points in $2^i$.
The closest covering point of $v$ is its {\em parent}.
The distance from a point in $S_i$ to its ancestor in $S_j$ $(j>i)$ is at most
$\sum_{k=i+1}^j 2^k = 2 \cdot (2^j - 2^{i+1}) < 2 \cdot 2^j$.

Given $S$, a hierarchy for $S$ can be built in time 
$\min \{ 2^{O(\ddim)}n, O(n^2) \}$, and this term also bounds the 
space needed to store the hierarchy \cite{KL04,HM06,CG06}. 
(The stored hierarchy is {\em compressed}, in that points which do not cover any other
points in the previous net may be represented implicitly.)
Similarly, we can maintain links from each
hierarchical point in $S_i$ to all neighbors in net $S_i$ within distance $c \cdot 2^i$, and this
increases the space requirement to $\min \{ c^{O(\ddim)}n, O(n^2) \}$.
From a hierarchy, a {\em net-tree} may be extracted by placing an edge between each point
$p \in S_i$ and its parent in $S_{i+1}$ \cite{KL04}. 
The height of the (compressed) tree is bounded by $O (\min \{n, \log \Delta \})$.

\section{Background: The near neighbor problem}\label{sec:background}

In this section, we review basic techniques for ANN in $\ell_p$ spaces.
A standard technique for ANN on set $V \subset \ell_p$ 
is the reduction of this problem to that of 
solving a series of so-called approximate {\em near} neighbor problems 
\cite{IM98,HIM12,KOR98} (also called the point location in equal balls problem).
The $c$-approximate near neighbor problem for a fixed distance $r$ is defined thus: 
\begin{itemize}
\item
If there is a point in $V$ within distance $r$ of query $q$, return some
point in $V$ within distance $cr$ of $q$.
\item
If there is no point in $V$ within distance $r$ of query $q$, return
{\em null} or some point in $V$ within distance $cr$ of $q$.
\end{itemize}
For example, suppose we had access to an oracle for the 
$c$-approximate near neighbor problem.
If preprocess a series of oracles for the $O(\log \Delta)$ values 
$$r = \{ \diam(V), \frac{\diam(V)}{2}, \frac{\diam(V)}{4}, \ldots \},$$
and query them all,
then clearly one of these queries would return a $2c$-ANN of $q$.
In particular, if $r'$ is the distance from $q$ to its nearest
neighbor in $V$, then the oracle query for the value for $r$ 
satisfying $r' \le r < 2r'$ would return such a solution.
Further, it suffices to seek the minimum $r$ that returns
an answer other then {\em null}. Then we may execute a binary
search over the candidate values of $r$, and so $O(\log \log \Delta)$
oracle queries suffice.

Har-Peled \etal \cite{HIM12} show that for all metric spaces, the 
ANN problem can be solved by making only $O(\log n)$ queries to
oracles for the near neighbor problem. The space requirement is
$O(\log^2n)$ times that required to store a single oracle.
The reduction incurs a loss in the approximation factor, but this loss
can be made arbitrarily small.
In Section \ref{sec:mazur}, we will require a more specialized reduction,
where we allow near neighbor oracle queries only on problem instances
that have constant aspect ratio.

\section{ANN for $\ell_p$-space via embedding into $\ell_\infty$}\label{sec:lp-linfty}

In this section, we show how to embed
$\ell_p$-space into $\ell_\infty$ in a way that 
guarantees that an ANN in the embedded space is
also an ANN in the origin space, albeit with a degradation in
the approximation quality.
This allows us to extend Indyk's $\ell_\infty$ ANN structure to $\ell_p$ as well. 
After giving the general result in Section \ref{sec:p-logn},
we refine it in Section \ref{sec:p-ddim} to give distortion that depends
on the doubling dimension of the space instead of its cardinality.

\subsection{Embedding into $\ell_\infty$}\label{sec:p-logn}

Here we show that any $n$-point $\ell_p^d$ space 
admits an oblivious embedding into
$\ell_\infty^d$ with favorable properties: 
The embedding is non-contractive with high probability, 
while the interpoint expansion is small. Hence the embedding 
approximately preserves the nearest neighbor for a fixed
query point $q$, and keeps more distant points far away.
This implies that Indyk's $\ell_\infty$ 
ANN algorithm can be applied to all $\ell_p$.

\paragraph{Max-stability.}
Our embedding follows the lead of \cite{SHKT07,An13} in utilizing max-stable 
random variables, specifically those drawn from a Fr\'{e}chet distribution: 
Having fixed $p$, our Fr\'{e}chet random variable $Z$ obeys for all $x > 0$
$$\Pr[Z \le x] = e^{-x^{-p}}$$
and so 
$$\Pr[Z > x] = 1 - e^{-x^{-p}} \le x^{-p}.$$
We state the well-known max-stability property of the Fr\'{e}chet distribution:

\begin{fact}\label{fact:max-stable}
Let random variables $X,Z_1,\ldots,Z_d$ be drawn from the above Fr\'{e}chet distribution,
and let $v = (v_1,\ldots,v_d)$ be a non-negative valued vector. Then the random variable
$$Y := \max_i \{ v_i Z_i \}$$
is distributed as $\|v\|_p \cdot X$
(that is, $Y \sim \|v\|_p \cdot X$).
\end{fact}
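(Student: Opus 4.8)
The plan is to compute the cumulative distribution function of $Y := \max_i\{v_i Z_i\}$ directly and recognize it as that of $\|v\|_p \cdot X$. First I would dispose of the degenerate coordinates: if $v_i = 0$ then $v_i Z_i = 0$ almost surely, so such coordinates contribute nothing to the maximum (assuming $v \neq 0$; the case $v = 0$ is trivial since both sides are $0$). So assume $v_i > 0$ for the coordinates that matter. Fix $x > 0$. Using independence of the $Z_i$, I would write
\[
\Pr[Y \le x] = \Pr\big[\forall i:\ v_i Z_i \le x\big] = \prod_{i} \Pr[Z_i \le x/v_i] = \prod_i e^{-(x/v_i)^{-p}} = \exp\!\Big(-\sum_i v_i^p \, x^{-p}\Big) = \exp\!\big(-\|v\|_p^p \, x^{-p}\big).
\]
Here I used that for $v_i > 0$ the event $v_i Z_i \le x$ is exactly $Z_i \le x/v_i$, and that $(x/v_i)^{-p} = v_i^p x^{-p}$.

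Next I would identify the right-hand side. For the random variable $W := \|v\|_p \cdot X$ with $X$ Fréchet, we have for $x > 0$
\[
\Pr[W \le x] = \Pr\big[X \le x/\|v\|_p\big] = \exp\!\big(-(x/\|v\|_p)^{-p}\big) = \exp\!\big(-\|v\|_p^p \, x^{-p}\big),
\]
which matches the expression computed above for all $x > 0$. Since both $Y$ and $\|v\|_p \cdot X$ are supported on $(0,\infty)$ and have the same CDF there, they are equal in distribution, which is the claim.

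I do not anticipate a genuine obstacle here — the statement is a standard computation — so the only things to be careful about are the bookkeeping points: handling zero coordinates of $v$ correctly, being explicit that the product over coordinates collapses into a single exponential because the exponents add, and noting that the identity of CDFs on $(0,\infty)$ suffices since neither distribution places mass at $0$ or $-\infty$. If one wanted to avoid the zero-coordinate discussion entirely, an alternative is to observe that $\Pr[v_i Z_i \le x] = e^{-v_i^p x^{-p}}$ holds for $v_i = 0$ as well (reading $0^p x^{-p} = 0$, so the factor is $1$), so the product formula is valid verbatim for every non-negative $v$, which is the cleanest route to write up.
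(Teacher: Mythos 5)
Your proof is correct and follows essentially the same route as the paper: compute $\Pr[Y\le x]$ via independence and the product of the Fr\'echet CDFs, collapse the exponents into $e^{-\|v\|_p^p x^{-p}}$, and match this to the CDF of $\|v\|_p\cdot X$. The only difference is that you explicitly handle the $v_i=0$ case, a bookkeeping detail the paper omits.
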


To see this, observe that 
\begin{eqnarray*}
\Pr[Y \le x] 
&=& \Pr[\max_i \{ v_i Z_i \} \le x]	\\
&=& \Pi_i \Pr[v_i Z_i \le x]		\\
&=& \Pi_i \Pr[Z_i \le x/v_i]		\\
&=& \Pi_i e^{-(v_i/x)^p}		\\
&=& e^{- (\sum_i v_i^p)/x^p}		\\
&=& e^{- (\|v\|_p/x)^{p}}.
\end{eqnarray*}
And similarly,
$\Pr[\|v\|_p \cdot X \le x]
= \Pr[X \le x/\|v\|_p]
= e^{- (\|v\|_p/x)^{p}}.$
So indeed the two random variables have the same distribution.

\paragraph{Embedding into $\ell_\infty$.}
Given set $V \subset \ell_p$ of $d$-dimensional vectors, 
the embedding $f_b: V \rightarrow \ell_\infty$ (for any constant
$b > 0$) is defined as follows:
First, we draw $d$ Fr\'{e}chet random variables $Z_1,\ldots,Z_d$
from the above distribution.
Embedding $f_b$ maps each vector $v \in V$ to vector
$f_b(v) = (b v_1 Z_1, \ldots, b v_d Z_d)$.
The resulting set is $V' \in \ell_\infty$.
Clearly, the embedding can be computed in time 
$O(d)$ per point.
We prove the following lemma.

\begin{lemma}\label{lem:p-logn}
For all $p \ge 1$, embedding $f_b$ applied to set $V \subset \ell_p$,
for $b = (3 \ln n)^{1/p}$, satisfies 
\begin{itemize}
\item
Contraction: $f_b$ is non-contractive
with probability at least $1-\frac{1}{n}$.
\item
Expansion: For any pair $u,w \in V$, 
$$\|f_b(u)-f_b(w)\|_\infty \le 2b \|u-w\|_p$$
with probability at least $1-2^{-p}$.
\end{itemize}
\end{lemma}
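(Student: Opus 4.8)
The plan is to reduce both parts of the lemma to a single one-dimensional tail estimate for a Fr\'echet variable, via the max-stability property of Fact~\ref{fact:max-stable}. Fix a pair $u,w \in V$; we may assume $u \neq w$, since otherwise both inequalities hold trivially. Writing $v = u-w$ and using that each $Z_i \ge 0$, we have
$$\|f_b(u)-f_b(w)\|_\infty = \max_i |b v_i Z_i| = b\cdot\max_i |v_i|\,Z_i .$$
Applying Fact~\ref{fact:max-stable} to the non-negative vector $(|v_1|,\dots,|v_d|)$, the variable $\max_i |v_i| Z_i$ has the same distribution as $\|v\|_p\cdot X$ for a Fr\'echet variable $X$; hence $\|f_b(u)-f_b(w)\|_\infty$ is distributed as $b\|u-w\|_p\cdot X$. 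Since $b\|u-w\|_p>0$, the pair is contracted exactly when $X<1/b$ and its expansion exceeds $2b$ exactly when $X>2$, so everything is governed by the law of a single Fr\'echet variable.

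For the expansion bound, nothing further is needed: for a fixed pair, $\Pr[\,\|f_b(u)-f_b(w)\|_\infty > 2b\|u-w\|_p\,] = \Pr[X>2] = 1-e^{-2^{-p}} \le 2^{-p}$, using the stated inequality $\Pr[Z>x]\le x^{-p}$. This is a per-pair statement, so no union bound is involved.

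For the contraction bound, I would first bound a single pair: $\Pr[X<1/b] = \Pr[X\le 1/b] = e^{-(1/b)^{-p}} = e^{-b^p}$, and substituting $b=(3\ln n)^{1/p}$ gives $e^{-3\ln n}=n^{-3}$. A union bound over all ${n \choose 2}$ pairs then shows that the probability that some pair is contracted is at most ${n \choose 2}\,n^{-3}\le \frac{1}{2n}\le\frac1n$; note the union bound needs no independence, which matters here because the variables $X$ attached to different pairs all share the coordinates $Z_1,\dots,Z_d$ and are therefore correlated.

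There is no genuinely difficult step; the only thing that requires care is the bookkeeping linking the choice $b=(3\ln n)^{1/p}$ to the union bound — one needs the per-pair contraction probability to be $O(n^{-3})$ so that multiplying by the ${n \choose 2}$ pairs still leaves failure probability below $1/n$, which is exactly why the constant $3$ appears. It is also worth observing that the asymmetry between the two guarantees is precisely what the near-neighbor reduction of Section~\ref{sec:background} consumes: non-contraction must hold for every database point simultaneously so that far points remain far, whereas the expansion estimate is needed only for the single query-to-nearest-neighbor pair, which is why a per-pair failure probability of $2^{-p}$ suffices there.
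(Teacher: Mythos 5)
Your proof is correct and follows essentially the same route as the paper's: reduce to a single Fr\'echet tail via max-stability, bound the per-pair contraction probability by $n^{-3}$, union-bound over pairs, and observe that the expansion guarantee is per-pair so no union bound is needed. The only difference is cosmetic: you spell out the passage to $(|v_1|,\dots,|v_d|)$ needed before invoking Fact~\ref{fact:max-stable} on a signed difference vector, a detail the paper leaves implicit.
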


\begin{proof}
Consider some vector $v$ with $\|v\|_p = 1$.
Then by Fact \ref{fact:max-stable},
$\|f_b(v)\|_\infty \sim b \|v\|_p \cdot X = b \cdot X,$
where $X$ is a Fr\'{e}chet random variable
drawn from the above distribution. Then 
$$\Pr[\|f_b(v)\|_\infty < 1] 
= \Pr[b \cdot X < 1]
= e^{-(1/b)^{-p}}
= \frac{1}{n^3}.$$
Since the embedding is linear, $v$ may be taken to be an inter-point
distance between two vectors in $V$ ($v = \frac{u-w}{\| u-w \|_p}$), and so the  
probability that {\em any} inter-point distance decreases is less than
$n^2 \cdot \frac{1}{n^3} = \frac{1}{n}$. 

Also,
$\Pr[\|f_b(v)\|_\infty > 2 b] \le 2^{-p}$,
and so for any vector pair $u,w \in V$ we have
$\Pr[\|f_b(u)-f_b(w)\|_\infty > 2 b \|u-w\|_p] 
\le 2^{-p}.$
\end{proof}

Indyk's near neighbor structure is given a set $V \in \ell_\infty$ and distance $r$,
and answers $O(\log \log d)$-near neighbor queries for distance $r$ in time 
$O(d \log n)$ and space $n^{1+\delta}$, where $\delta$ is an arbitrarily small constant
(that affects the exact approximation bounds).
Combining this structure and Lemma \ref{lem:p-logn}, we have:

\begin{corollary}\label{cor:p-logn}
Given set $V \subset \ell_p$ for $p>2$ and a fixed distance $r$,
there exists a data structure of size $n^{1+\delta}$ 
(for arbitarily small constant $\delta$) 
that solves the 
$O(\log\log d \log^{1/p}n)$-approximate
near neighbor problem for distance $r$ with query time
$O(d \log n)$, 
and is correct with probability at least $1 - \frac{1}{n} - \frac{1}{2^p}$.
\end{corollary}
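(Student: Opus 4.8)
The plan is to use the embedding of Lemma~\ref{lem:p-logn} as a black box in front of Indyk's $\ell_\infty$ near neighbor structure. During preprocessing I would draw the $d$ Fr\'echet variables once and for all, form $V' = f_b(V) \subset \ell_\infty^d$ with $b = (3\ln n)^{1/p}$, and build Indyk's near neighbor structure on $V'$ for the \emph{scaled} radius $2br$. Since that structure has size $n^{1+\delta}$ and answers $O(\log\log d)$-approximate near neighbor queries in time $O(d\log n)$, and computing $f_b(q)$ costs only $O(d)$, the claimed space and query-time bounds are immediate. On a query $q$ I would compute $f_b(q)$ using the stored random variables (so the embedding is genuinely oblivious and applies to $q$), run Indyk's query at radius $2br$, and return whatever point of $V$ it names.

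Correctness follows the two cases of the near neighbor definition. First suppose some $u^* \in V$ satisfies $\|u^* - q\|_p \le r$. Because $f_b$ is linear, the expansion bound of Lemma~\ref{lem:p-logn} applies verbatim to the single fixed unit vector $(u^*-q)/\|u^*-q\|_p$, giving $\|f_b(u^*)-f_b(q)\|_\infty \le 2b\|u^*-q\|_p \le 2br$ with probability at least $1-2^{-p}$. On this event Indyk's structure, queried at radius $2br$, must return some $w \in V$ with $\|f_b(w)-f_b(q)\|_\infty \le O(\log\log d)\cdot 2br$. In the other case --- no point of $V$ within $r$ of $q$ --- the structure is allowed to return null or any point within its approximation radius, which matches the specification, so nothing more is required.

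To push the guarantee back into $\ell_p$ I would invoke non-contraction not only on the $\binom{n}{2}$ pairs inside $V$ but also on the $n$ extra pairs $\{(v,q): v\in V\}$: each fixed pair contracts with probability $1/n^3$, so the union bound over all $\le n^2$ of them still shows that with probability at least $1-\tfrac{1}{n}$ the embedding is non-contractive on every one of them. On that event $\|w-q\|_p \le \|f_b(w)-f_b(q)\|_\infty \le O(\log\log d)\cdot 2br = O(\log\log d)\cdot O(\log^{1/p} n)\cdot r$, so $w$ is an $O(\log\log d\,\log^{1/p}n)$-approximate near neighbor for distance $r$. A final union bound over the two bad events --- contraction ($\le \tfrac{1}{n}$) and expansion of the one relevant pair ($\le 2^{-p}$) --- gives the stated success probability $1-\tfrac{1}{n}-\tfrac{1}{2^p}$.

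There is no deep obstacle here; the point needing care is the asymmetric role of the two tail bounds once the query point enters. The expansion bound is used for exactly one pair, $(u^*,q)$, which is fixed given the instance, so it costs only $2^{-p}$ and must \emph{not} be union-bounded over all of $V$; the contraction bound is needed for the returned pair $(w,q)$, whose identity is not known in advance, which is what forces the union bound over the $q$-pairs (and the slightly larger, but still $\le \tfrac{1}{n}$, failure probability). Keeping the radius bookkeeping consistent --- building and querying Indyk's structure at $2br$ rather than $r$ --- is the other easy-to-slip step.
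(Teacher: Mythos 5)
Your proof is correct and follows essentially the same route as the paper: embed $V$ and $q$ with $f_b$, build Indyk's $\ell_\infty$ near-neighbor structure at the scaled radius $2br = O(\log^{1/p}n)\cdot r$, use expansion for the one relevant pair to avoid a spurious \emph{null}, and use non-contraction to pull the $O(\log\log d)$ guarantee back into $\ell_p$. The one place you are more careful than the paper is in explicitly extending the non-contraction union bound to the $n$ pairs $(v,q)$: Lemma~\ref{lem:p-logn} as stated only covers pairs within $V$, and the paper's proof silently relies on the slack in its $n^2$ (rather than $\binom{n}{2}$) union bound to absorb these; your observation that the expansion bound must \emph{not} be union-bounded while the contraction bound must include $q$ is exactly the reasoning the paper leaves implicit.
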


\begin{proof}
Given a set $V$ and distance $r$, we preprocess the set by computing the 
embedding of Lemma \ref{lem:p-logn} for each point. On the resulting set
we precompute Indyk's structure for distance 
$r' = O(\log^{1/p}n) \cdot r$. Given a query point $q$,
we embed the query point and query Indyk's structure. The space and runtime
follows.

For correctness, by the guarantees of Lemma \ref{lem:p-logn}, 
if $\|q-v\|_p \le r$ for some point $v \in V$, then under the expansion
guarantee of the mapping their $\ell_\infty$
distance is at most $2 (3 \ln n)^{1/p} \cdot r \le r'$, 
so the structure does not return {\em null}.
On the other hand, if the embedding succeeds then it is non-contractive,
and so any returned point must be within $\ell_p$ distance $O(\log\log d) \cdot r'$ of $q$. 
The probability follows from the contraction and expansion guarantees 
of Lemma \ref{lem:p-logn}.
\end{proof}

Finally, we use the near neighbor algorithm to solve the ANN problem, 
which was our ultimate goal:

\begin{theorem}\label{thm:p-logn}
Given set $V \subset \ell_p$ for $p>2$,
there exists a data structure of size $n^{1+\delta}$ (for arbitarily small constant $\delta$) 
which returns an 
$O(\log\log d \log^{1/p}n)$-ANN
in time $O(d \log^2 n) \cdot \big\lceil \frac{\log \log n}{p} \big\rceil$, 
and is correct with constant probability.
\end{theorem}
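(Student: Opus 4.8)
The plan is to combine Corollary~\ref{cor:p-logn} with the standard reduction from ANN to the near neighbor problem described in Section~\ref{sec:background}, and then address the two technical issues that arise: first, the reduction via binary search over scales needs access to \emph{many} near neighbor oracles (one per candidate radius $r$), which would blow up the space if the embedding of Lemma~\ref{lem:p-logn} had to be redrawn per scale; second, the per-oracle success probability of Corollary~\ref{cor:p-logn} is only $1-\tfrac1n-\tfrac1{2^p}$, and the expansion failure probability $2^{-p}$ is not summable over $\omega(\log n)$ scales when $p$ is small, so I cannot simply union-bound over all radii.

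First I would observe that the embedding $f_b$ is \emph{oblivious} and \emph{linear}: the same draw of Fréchet variables $Z_1,\dots,Z_d$ works simultaneously for all scales, since scaling the radius $r$ scales $r'$ by the same factor and the non-contraction guarantee (the bad event ``some interpoint distance shrinks'') is scale-free. So I draw the $Z_i$'s once, embed $V$ once into $\ell_\infty^d$, and then build Indyk structures for the $O(\log\Delta)$ radii $r'=2b\cdot 2^i$. By the reduction of Har-Peled \etal\ \cite{HIM12} it actually suffices to keep $O(\log n)$ near neighbor oracles with $O(\log^2 n)$ space overhead, and to answer an ANN query one performs a binary search using $O(\log\log\Delta)$, hence $O(\log n)$, oracle calls; since $\Delta$ may be assumed polynomial in $n$ (e.g.\ by the oblivious $\ell_p\to\ell_p$ dimension reduction noted in the introduction, or by standard snapping), $\log\log\Delta = O(\log\log n)$ and the query touches only $O(\log\log n)$ oracles. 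Each oracle call costs $O(d\log n)$, so the query time is $O(d\log n\log\log n)$ from near neighbor calls plus the $O(d\log^2 n)$ overhead of the \cite{HIM12} reduction; to get the stated bound $O(d\log^2 n)\cdot\lceil\tfrac{\log\log n}{p}\rceil$ I would instead use the elementary scale-search: maintain all $O(\log\Delta)=O(\log n)$ Indyk structures (total space still $n^{1+\delta}$ up to the constant in $\delta$), and binary-search the correct scale with $O(\log\log n)$ oracle calls, but \emph{boost} each call by running it $\lceil (\log\log n)/p\rceil$ independent times so the expansion-side failure drops below $1/\mathrm{poly}(\log n)$.

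Next, correctness. The non-contraction event of Lemma~\ref{lem:p-logn} holds with probability $1-\tfrac1n$ once and for all, simultaneously across every scale and every query, because it is a statement about $V$ alone and is scale-invariant; condition on it. Then every Indyk structure never outputs a point farther than $O(\log\log d)\cdot r'=O(\log\log d\log^{1/p}n)\cdot 2^i$ from $q$, so there is no ``false positive'' beyond the claimed approximation. The only remaining risk is a ``false negative'': at the scale $r$ with $\|q-v\|_p\le r$ for the true nearest neighbor $v$, the embedded distance $\|f_b(q)-f_b(v)\|_\infty$ must be $\le 2b\,r=r'$, which fails with probability $\le 2^{-p}$ per trial. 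Running that oracle $\lceil(\log\log n)/p\rceil$ times independently makes the failure probability at most $2^{-p\lceil(\log\log n)/p\rceil}\le 2^{-\log\log n}=1/\log n$, and a union bound over the $O(\log\log n)$ scales visited in the binary search keeps the total false-negative probability $o(1)$. (If one instead keeps a dedicated fresh embedding per scale, the $\tfrac1n$ non-contraction terms are summable over $O(\log n)$ scales and one can skip the conditioning argument; the boosting for the $2^{-p}$ term is still needed.) Combining, the structure returns an $O(\log\log d\log^{1/p}n)$-ANN with constant probability.

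The main obstacle is the low-probability expansion guarantee: unlike the $\tfrac1n$ contraction term, the $2^{-p}$ term is genuinely not negligible for small $p$ and is the reason the naive ``build oracles for all scales and union-bound'' argument fails. The fix — independent repetition of only the scale being probed, paying a $\lceil(\log\log n)/p\rceil$ factor in query time — is exactly what produces the extra factor in the stated running time, so getting the bookkeeping of \emph{which} oracle calls get boosted (only the $O(\log\log n)$ calls along the binary search, not all $O(\log n)$ scales) is the delicate point. Everything else — space $n^{1+\delta}$, per-point embedding time $O(d)$, per-oracle query time $O(d\log n)$ — is inherited directly from Corollary~\ref{cor:p-logn} and the reduction in Section~\ref{sec:background}.
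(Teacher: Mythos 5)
Your core mechanism --- boosting each near-neighbor oracle call by $\lceil (\log\log n)/p\rceil$ independent repetitions so that the $2^{-p}$ expansion-failure probability drops to $O(1/\log n)$, then union-bounding over the oracle calls --- is exactly the paper's argument and is the right idea. Your observation that the non-contraction event of Lemma~\ref{lem:p-logn} is a statement about $V$ alone, holding simultaneously for all scales, is a valid (if not strictly necessary) refinement. However, there are two problems with how you get from the boosting step to the theorem.

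First, you misread the reduction of Har-Peled \etal~\cite{HIM12}. It is \emph{not} a binary search over scales: it is a separate reduction that, for \emph{any} metric and \emph{any} aspect ratio, solves ANN by issuing $O(\log n)$ near-neighbor oracle queries, at a multiplicative $O(\log^2 n)$ cost in \emph{space} (not query time). Plugging this in directly gives query time $O(\log n) \cdot \lceil (\log\log n)/p\rceil \cdot O(d\log n) = O(d\log^2 n)\cdot\lceil (\log\log n)/p\rceil$, which is precisely the stated bound. There is no ``$O(d\log^2 n)$ query-time overhead of~\cite{HIM12}'' that you need to route around, so your switch to the elementary scale-search is unnecessary. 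This is in fact what the paper does.

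Second, the alternative you switch to --- binary search over $O(\log\Delta)$ precomputed oracles using $O(\log\log\Delta)$ probes --- only gives the claimed bounds if $\Delta = n^{O(1)}$, which you assert ``may be assumed by standard snapping'' or via dimension reduction. Neither justification holds up: the oblivious $\ell_p\to\ell_p$ dimension reduction in the introduction lowers the vector dimension, not the aspect ratio, and ``snapping'' to a grid is not a generic black-box step for ANN in arbitrary $\ell_p$ point sets (it is exactly the nontrivial thing~\cite{HIM12} accomplishes, and it is the reason the paper invokes that reduction rather than the elementary scale search). If you want to keep the scale-search route, you would need to actually carry out an aspect-ratio reduction, e.g.\ via a net hierarchy as in Section~\ref{sec:mazurnns}; as written this step is a genuine gap. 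The cleanest fix is simply to retract the detour and apply~\cite{HIM12} as above.
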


\begin{proof}
We invoke the reduction of Har-Peled \etal \cite{HIM12} to reduce ANN to 
$O(\log n)$ near neighbor queries.
We require that all $O(\log n)$ queries succeed with constant
probability, hence each near neighbor query must be correct with 
probability $1-O \left( \frac{1}{\log n} \right)$.
Each near neighbor query is resolved by preprocessing and querying 
$O(\lceil \log_{2^p} \log n \rceil) = 
O \left( \Big\lceil \frac{\log \log n}{p} \Big\rceil \right)$ 
independent structures of Corollary \ref{cor:p-logn}.
The probability that all these structures fail simultaneously is 
$2^{-p \cdot O((\log \log n)/p)} = O \left( \frac{1}{\log n} \right)$, 
and so at least one is correct with the desired probability. The runtime follows.

The reduction of \cite{HIM12} increases the space usage by a factor of 
$O(\log^2 n)$, and the additional oracles by a factor of 
$O \left( \Big\lceil \frac{\log \log n}{p} \Big\rceil \right)$,
but these increases are subsumed under the constant $\delta$ in the exponent.
\end{proof}

We note that when $p = \Omega(\log \log n)$, we recover the 
$O(\log \log d)$-approximation guarantees of Indyk's $\ell_\infty$ structure,
previously known to extend only to $p = \Omega(\log d)$).

\subsection{Embedding with distortion dependent on the doubling dimension}\label{sec:p-ddim}

Here we give an ANN algorithm whose approximation factor depends
on the doubling dimension, instead of the cardinality of the space.
We begin with a statement that applies only to nets. 
Our approach is motivated by a technique for low-dimensional 
Euclidean embeddings that appeared in \cite{IN07}.

\begin{lemma}\label{lem:net}
Let set $V \subset \ell_p$ have minimum inter-point distance 1, 
and let $q \in \ell_p$ be any query point. Let $\ddim \ge 2$ be the
doubling dimension of $V \cup \{q\}$, and fix any value $c \ge 4$.
For all $p \ge 1$, embedding $f_1$ (of Lemma \ref{lem:p-logn})
applied to set $V \cup \{q\}$ satisfies 
\begin{itemize}
\item
Contraction: Let $W \subset V$ include all points at distance at least
$h = c(8 \log c \cdot \ddim \ln \ddim)^{1/p}$ from $q$. Then the
distance from $q$ to $W$ does not contract to $c$ or less, 
$$\min_{v \in W} \| f_1(q) - f_1(v)\|_\infty > c,$$
with probability at least $1-\ddim^{-6\ddim}$.
\item
Expansion: For any pair $v,w \in V \cup \{q\}$, 
$$\|f(v)-f(w)\|_\infty \le 2 \|v-w\|_p$$
with probability at least $1-2^{-p}$.
\end{itemize}
\end{lemma}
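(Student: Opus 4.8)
The expansion bound is immediate: it is literally the expansion statement of Lemma~\ref{lem:p-logn} with $b=1$, since $f_1$ is the same embedding, and the argument there (a single application of Fact~\ref{fact:max-stable} plus the Fr\'echet tail bound $\Pr[Z>x]\le x^{-p}$) did not use any particular value of $b$. So the real content is the contraction claim, and I would prove it by a union bound over the ``far'' set $W$, but where the key observation is that we must first bound $|W|$ — or rather, bound the size of a net of $W$ — using the packing property, and then argue that covering a net suffices because the embedding is Lipschitz with good probability.

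First I would set up the failure event for a single point $v\in W$: by Fact~\ref{fact:max-stable}, $\|f_1(q)-f_1(v)\|_\infty \sim \|q-v\|_p\cdot X$ where $X$ is Fr\'echet, so
$$\Pr\bigl[\|f_1(q)-f_1(v)\|_\infty \le c\bigr] = \Pr\bigl[X \le c/\|q-v\|_p\bigr] = \exp\!\bigl(-(\|q-v\|_p/c)^p\bigr) \le \exp\!\bigl(-(h/c)^p\bigr),$$
using $\|q-v\|_p\ge h$. With $h = c(8\log c\cdot\ddim\ln\ddim)^{1/p}$ this is $\exp(-8\log c\cdot\ddim\ln\ddim) = (\ddim)^{-8\log c\cdot\ddim}$ — a very small per-point failure probability. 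Now I cannot naively union-bound over all of $W$, since $|W|$ could be as large as $n$, which need not be polynomial in $\ddim$. Instead I would take a suitable net $N$ of $W$ at scale proportional to $h$ (say a $\Theta(h)$-net) and union-bound only over $N$. By Lemma~\ref{lem:pack}, the number of net points inside any ball of radius $R$ around $q$ is at most $(2R/\Theta(h))^{\ddim}$; summing over the $O(\log\text{(aspect ratio)})$ dyadic annuli — or more cleanly, charging net points in annulus $[2^i h, 2^{i+1}h)$ against the sharper tail bound $\exp(-(2^i)^p\cdot 8\log c\cdot\ddim\ln\ddim)$ available at that distance, which kills the $2^{i\ddim}$-type packing growth geometrically — I get that the total failure probability over all of $N$ is at most something like $\ddim^{O(\ddim)}\cdot\ddim^{-8\log c\cdot\ddim}$, which I want to drive below $\ddim^{-6\ddim}$. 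The constant $8$ and the factor $\log c$ in the definition of $h$ are exactly the slack that makes this work: $8\log c\cdot\ddim \ge 6\ddim + (\text{packing overhead})$ when $c\ge 4$ and $\ddim\ge 2$.

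To pass from the net $N$ back to all of $W$: on the expansion event (probability $\ge 1-2^{-p}$, which I'd fold in, or handle separately), every point $v\in W$ has some net representative $v'\in N$ with $\|v-v'\|_p = O(h)$, hence $\|f_1(v)-f_1(v')\|_\infty = O(h)$; choosing the net scale small enough relative to $c$ — this is where $c\ge 4$ and the precise constants in $h$ versus $c$ matter — the representative $v'$ is itself far enough from $q$ (distance $\ge h - O(h) $, still $\Omega(h)$, so still in the regime of the tail bound) that $\|f_1(q)-f_1(v')\|_\infty > c + O(h)$ forces $\|f_1(q)-f_1(v)\|_\infty > c$. The main obstacle, and the step deserving the most care, is the bookkeeping in this net/packing union bound: getting the annulus decomposition and the per-annulus tail-versus-packing tradeoff to close with the stated constants ($c\ge 4$, $\ddim\ge 2$, exponent $8\log c$, final probability $\ddim^{-6\ddim}$) rather than merely ``some constants''. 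Everything else is routine.
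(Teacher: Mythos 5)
Your per-point tail calculation and your instinct about the annulus/packing tradeoff are both right, but the net argument you insert in the middle is both unnecessary and, as sketched, does not close with the stated probability.

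The place you go wrong is the sentence ``I cannot naively union-bound over all of $W$, since $|W|$ could be as large as $n$.'' That worry ignores the hypothesis that $V$ has minimum inter-point distance $1$: this is exactly what makes the naive annulus-by-annulus union bound over $W$ itself work, with no net. Write $W_i$ for the points of $W$ at $\ell_p$-distance in $[2^i,2^{i+1})$ from $q$. Since $W_i \subset V$ has minimum inter-point distance $1$ and lies in a ball of radius $2^{i+1}$ about $q$, Lemma~\ref{lem:pack} gives $|W_i|\le 2^{O(i\cdot\ddim)}$ directly. For a single $v\in W_{j+k}$ (with $j=\log h$) the tail probability is $\exp\!\bigl(-2^{kp}\cdot 8\log c\cdot\ddim\ln\ddim\bigr)$, which decays doubly exponentially in $k$; this crushes the singly-exponential packing growth $2^{O((j+k)\ddim)}$, and summing over $k\ge 0$ yields the claimed $\ddim^{-6\ddim}$ bound once the constants $8\log c$, $c\ge4$, $\ddim\ge 2$ are chopped through. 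That is the paper's whole argument, and it is shorter than what you propose.

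Your net-based detour, by contrast, has a genuine gap at the step ``pass from the net $N$ back to all of $W$.'' You invoke ``the expansion event'' as if it holds uniformly, but the lemma's expansion guarantee is per pair: for a fixed pair it holds with probability $1-2^{-p}$. To cover every $v\in W$ by its net representative $v'$ you would need all $|W|$ such pairs to expand by at most a factor $2$ simultaneously, and a union bound gives failure probability $|W|\cdot 2^{-p}$, which can be $\Theta(n2^{-p})$ and is nowhere near $\ddim^{-6\ddim}$. Raising the expansion threshold to some $\alpha$ (so each pair fails with probability $\alpha^{-p}$) lets you drive the union bound down, but then the slack $\alpha\cdot O(h)$ you must leave between the net-point bound $c'$ and the target $c$ becomes much larger than $c$ (since $h\gg c$ already), and the tail probability $\exp(-(\|q-v'\|_p/c')^p)$ for a net point at distance $\approx h$ degrades to a constant: the argument unravels. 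Even setting that aside, ``folding in'' the expansion failure would weaken the contraction claim to $1-\ddim^{-6\ddim}-2^{-p}$, which is not what the lemma asserts. Drop the net entirely and union-bound over the annuli of $W$ directly.
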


\begin{proof}
Let $W_i \subset W$ include all points 
with distance to query point $q$ in the range $[2^i,2^{i+1})$.
Since $W$ has minimum inter-point distance 1 and diameter less that $2^{i+1}$,
Lemma \ref{lem:pack} implies that 
$|W_i| \le 2^{(i+2)\ddim}$.
Let $E_i$ be the bad event that $W_i$ contains any point $v \in W_i$ for which
$\|f_1(v)-f_1(q)\|_\infty \le c$.

Set $j = \log h$, so that all points in $W$ are found in sets
$W_{j+k}$ for integral $k \ge 0$.
For any point $v \in W_{j+k}$
the probability that the distance from $q$ to $v$ contracts to $c$ or less is
\begin{eqnarray*}
\Pr[\|f(v)-f(q)\|_\infty \le c]
&=&	e^{-(\|v-q\|_p/c)^p}					\\
&\le&	e^{-(2^{j+k}/c)^p}					\\
&=&	e^{-(2^k (8 \log c \cdot \ddim \ln \ddim)^{1/p})^p}	\\
&=&	\ddim^{-2^{kp+3} \log c \cdot \ddim}			\\	
&\le&	\ddim^{-2^{k+3} \log c \cdot \ddim}.			
\end{eqnarray*}
Hence, the probability of bad event $E_{j+k}$ is at most
\begin{eqnarray*}
\Pr[E_{j+k}]
&\le&	|W_{j+k}| \cdot \ddim^{-2^{k+3} \log c \cdot \ddim}							\\
&=&	2^{(j + k + 2)\ddim} \cdot \ddim^{-2^{k+3} \log c \cdot \ddim}						\\
&=&	2^{(k+2+\log c)\ddim} \cdot (8 \ddim \ln \ddim)^{\dim/p} \cdot \ddim^{-2^{k+3} \log c \cdot\ddim}	\\
&\le&	2^{(k+2+\log c)\ddim} \cdot (8 \ddim \ln \ddim)^{\dim} \cdot \ddim^{-2^{k+3} \log c \cdot\ddim}		\\
&<&	2^{(k+5+\log c)\ddim} \cdot \ddim^{(2-2^{k+3} \log c) \ddim}						\\
&<&	\ddim^{(k+7+(1-2^{k+3})\log c)\ddim}									\\
&\le&	\ddim^{(k+9-2^{k+4})\ddim}										\\
&\le&	\ddim^{-(6 + 2^k)\ddim}										
\end{eqnarray*}
The probability that any point in $W$ contracts to within distance
$c$ of $q$ is at most
$\sum_{k=0}^{\infty} E_{j+k} 
< \sum_{k=0}^{\infty} \ddim^{-(6 + 2^k)\ddim} 
< \ddim^{-6\ddim}$,
as claimed.

The expansion guarantee follows directly from Fact \ref{fact:max-stable}:
$\Pr[\|f(v)-f(q)\|_\infty \ge 2 \|v-q\|_p]
\le 2^{-p}.$
\end{proof}

As before, Lemma \ref{lem:net} can be used to solve the near neighbor problem:

\begin{corollary}\label{cor:net}
Given set $V \subset \ell_p$ for $p>2$ and a distance $r$,
there exists a data structure of size $n^{1+\delta}$ (for arbitarily small constant $\delta$)
which solves the
$O(\log\log d (\log \log \log d \ddim \log \ddim)^{1/p})$-approximate
near neighbor problem for distance $r$ in time
$O(d \log n)$,
and is correct with probability
$1-2^{-p}-\ddim^{-6\ddim}$.
\end{corollary}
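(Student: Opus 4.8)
The plan is to mirror the structure of the proof of Corollary~\ref{cor:p-logn}, but using the net-based embedding of Lemma~\ref{lem:net} in place of the cardinality-based embedding of Lemma~\ref{lem:p-logn}, and dealing with the fact that Lemma~\ref{lem:net} only gives contraction guarantees relative to a \emph{single} scale. First I would observe that it suffices to build a structure for an instance of minimum inter-point distance $1$: rescaling $V$ and the query distance $r$ by a common factor changes neither the approximation ratio nor the runtime, and the embedding $f_1$ is linear. Next I would run Indyk's $\ell_\infty$ near-neighbor structure on the embedded set $f_1(V)$ with the target radius $r' = 2r$ (by the expansion guarantee of Lemma~\ref{lem:net}, a true near neighbor at $\ell_p$-distance $\le r$ maps to $\ell_\infty$-distance $\le 2r = r'$, so the structure does not return \emph{null} when it should not). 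The subtlety is that the contraction guarantee in Lemma~\ref{lem:net} is stated for query distance essentially $1$ (points contract to ``$c$ or less''): to apply it at the scale $r$ we need to know $r$ is comparable to the minimum inter-point distance, which holds precisely because after rescaling the minimum inter-point distance is $1$ and we only care about queries with $r = \Theta(1)$ — and we may reduce to exactly this case, since a near-neighbor oracle at radius $r$ only needs to distinguish ``within $r$'' from ``beyond $cr$,'' and the reduction in Section~\ref{sec:background} (and the specialized one in Section~\ref{sec:mazur}) is what ultimately controls which scales $r$ are queried.

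The core of the argument is the distance-translation step. Setting $c = \log\log d$ (the approximation factor of Indyk's structure, up to constants), Lemma~\ref{lem:net} with $\ddim = \ddim(V \cup \{q\})$ tells us that, with probability at least $1 - \ddim^{-6\ddim}$, every point of $V$ at $\ell_p$-distance at least
$$ h = c\,(8 \log c \cdot \ddim \ln \ddim)^{1/p} $$
from $q$ has $\ell_\infty$-distance to $f_1(q)$ strictly greater than $c$. Contrapositively, any point returned by Indyk's structure — which lies within $\ell_\infty$-distance $c \cdot r' = O(\log\log d)$ of $f_1(q)$ (after the appropriate rescaling so that the relevant threshold is $c$ rather than $cr'$) — must lie within $\ell_p$-distance $h$ of $q$. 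Plugging $c = \log\log d$ into the formula for $h$ gives
$$ h = O\!\Big(\log\log d \cdot \big(\log\log\log d \cdot \ddim \log \ddim\big)^{1/p}\Big), $$
which is exactly the claimed approximation factor: if the true nearest neighbor is within distance $r$, then $r' = 2r$ captures it, Indyk returns \emph{something}, and that something is within distance $h \cdot r$ of $q$. The space bound $n^{1+\delta}$ and query time $O(d \log n)$ are inherited directly from Indyk's structure (the embedding adds only $O(d)$ per point to preprocessing and $O(d)$ to the query).

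The probability bound is a union bound over the two failure modes of Lemma~\ref{lem:net}: the contraction event fails with probability at most $\ddim^{-6\ddim}$, and the expansion event (for the single relevant pair $q$ and its would-be near neighbor) fails with probability at most $2^{-p}$, giving overall correctness probability $1 - 2^{-p} - \ddim^{-6\ddim}$ as stated. The main obstacle I anticipate is the bookkeeping around scales: Lemma~\ref{lem:net} bakes the query radius into the constant $c$ (it is a ``contract-to-$c$'' statement, not a ``contract-by-a-factor'' statement), so one must be careful to invoke it on a normalized instance where the near-neighbor radius $r$ is $\Theta(1)$ and the minimum inter-point distance is $1$, and to argue that the $\ell_\infty$ radius fed to Indyk's structure is scaled consistently so that ``within $c \cdot (\text{Indyk radius})$ in $\ell_\infty$'' translates to ``within $h$ in $\ell_p$.'' Once the normalization is pinned down, everything else is a direct substitution into Lemma~\ref{lem:net} and an appeal to Indyk's near-neighbor data structure.
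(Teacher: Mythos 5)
Your overall plan—embed $V$ with $f_1$, feed the embedded set to Indyk's $\ell_\infty$ structure, and use the contraction guarantee of Lemma~\ref{lem:net} with $c=O(\log\log d)$ to translate the answer back to $\ell_p$—matches the paper's proof, and your distance-translation step and probability union bound are both correct. However, there is a genuine gap in the normalization step, and it is precisely the ``main obstacle'' you flag at the end.

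You claim that ``rescaling $V$ and the query distance $r$ by a common factor'' produces an instance with minimum inter-point distance $1$. This is false as stated: dividing every coordinate by $r$ gives minimum inter-point distance $(\text{original minimum})/r$, which can be arbitrarily smaller than $1$ if $V$ contains points closer than $r$. But Lemma~\ref{lem:net} requires minimum inter-point distance $1$, since the packing bound $|W_i| \le 2^{(i+2)\ddim}$ in its proof is false otherwise, and a dense cluster of points far from $q$ could defeat the union bound over contraction events. Your subsequent attempt to fix this (``we may reduce to exactly this case, since a near-neighbor oracle\ldots only needs to distinguish `within $r$' from `beyond $cr$,' and the reduction in Section~\ref{sec:background}\ldots controls which scales $r$ are queried'') is circular and misattributes the fix: the ANN-to-near-neighbor reduction controls which radii you query, but for each fixed $r$ the oracle must still handle an arbitrary $V$. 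The missing step, which the paper performs inside the oracle, is to first extract an $r$-net $V' \subset V$ and then scale by $1/r$; this really does produce minimum inter-point distance $1$, and costs only a factor of $2$ in the approximation because any $v \in V$ with $\|q-v\|_p \le r$ has a net-representative $w$ with $\|q-w\|_p \le 2r$. Your intuition that the oracle ``only needs to distinguish within $r$ from beyond $cr$'' is exactly the right reason net extraction is harmless, but you need to make the extraction explicit rather than appealing to the outer reduction, and your running approximation budget should account for the extra factor of $2$ from passing to the net (the paper scales and queries Indyk at radius $4$, not $2$, for this reason).
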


\begin{proof}
Given a set $V$ and distance $r$, we preprocess the set by extracting an 
$r$-net, and then scaling down the magnitude of all vectors by $r$, so that 
the resulting set has minimum inter-point distance 1.
We then compute the embedding of Lemma \ref{lem:net} 
into $\ell_\infty$ for each net-vector, and precompute Indyk's $\ell_\infty$ 
structure for distance $4$. Given a query point $q$, we scale it down by $r$, 
embed it into $\ell_\infty$ using the same embedding as before, and query
Indyk's structure on distance 4. The space and runtime follows.

For correctness, let
$\|q-v\|_p \le r$ for some point $v \in V$.
After extracting the net, some net-point $w$ satisfied
$\|w-v\|_p \le r$, and so
$\|q-w\|_p \le 2r$.
After scaling, we have
$\|q-w\|_p \le 2$, 
and after applying the embedding into $\ell_\infty$,
$\|q-w\|_p \le 4$ with probability at least 
$1-2^{-p} > \frac{3}{4}$.
In this case Indyk's near neighbor structure must return a point
within distance $O(\log \log d)$ of $q$ in the embedded space 
(that is $\ell_\infty$). 
By the guarantees of Lemma \ref{lem:net} 
(taking $c = O(\log \log d)$), the distance from
the returned point to $q$ in the scaled $\ell_p$ space is at most
$O(\log \log d (\log \log \log d \ddim \log \ddim)^{1/p})$, 
and so it is an 
$O(\log \log d (\log \log \log d \ddim \log \ddim)^{1/p})$-approximate
near neighbor in the origin space.
\end{proof}

Similar to the derivation of Theorem \ref{thm:p-logn}, we have:

\begin{theorem}
Given set $V \subset \ell_p$ for $p>2$,
there exists a data structure of size $n^{1+t}$ (for arbitarily small constant $t$) 
which returns an 
$O(\log\log d (\log \log \log d \cdot \ddim \log \ddim)^{1/p})$-ANN
in time 
$O(d \log^2 n \log \log n)
\cdot \left( \Big \lceil \frac{\log \log n}{p} \Big \rceil + 
\Big \lceil \frac{\log \log n}{\ddim \log \ddim} \Big \rceil \right)$, 
and is correct with positive constant probability.
\end{theorem}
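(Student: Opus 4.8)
The plan is to follow exactly the template used to derive Theorem~\ref{thm:p-logn}: amplify the success probability of the near neighbor structure of Corollary~\ref{cor:net} by independent repetition, and then feed the resulting high-probability near neighbor oracle into the ANN-to-near-neighbor reduction of Har-Peled \etal\ \cite{HIM12}. The only genuinely new bookkeeping concerns the number of repetitions, since Corollary~\ref{cor:net} has \emph{two} distinct failure modes, with total failure probability $2^{-p}+\ddim^{-6\ddim}$, and both terms must be pushed below $O(1/\log n)$ so that a union bound over the $O(\log n)$ near neighbor queries produced by the reduction still leaves constant overall success probability.

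First I would fix a target distance $r$ and build $m$ independent copies of the structure of Corollary~\ref{cor:net} for distance $r$ (all copies sharing the same $r$-net of $V$, but each drawing its own Fr\'echet variables for the embedding $f_1$). On a query $q$, run all $m$ copies, and from the (at most $m$) returned net-points discard any whose true $\ell_p$ distance to $q$ exceeds $hr$, where $h = O(\log\log d\,(\log\log\log d\cdot\ddim\log\ddim)^{1/p})$ is the approximation factor of Corollary~\ref{cor:net}; then return a surviving point of minimum $\ell_p$ distance to $q$, or {\em null} if none survive. Call a copy \emph{good} if both the contraction and expansion events of Lemma~\ref{lem:net} hold for it: in that case, reading the contraction guarantee contrapositively bounds how far Indyk's output can be, so the copy returns {\em null} or a point within $\ell_p$ distance $hr$ of $q$, and it returns a non-{\em null} point whenever some point of $V$ lies within $r$ of $q$. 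Hence the aggregated oracle is a correct $h$-approximate near neighbor oracle as long as at least one of the $m$ copies is good, which fails with probability at most $(2^{-p}+\ddim^{-6\ddim})^m \le 2^{(1-\min\{p,\,6\ddim\log\ddim\})m}$. Since $p>2$ and $\ddim\ge 2$, taking $m = O\!\left(\lceil\frac{\log\log n}{p}\rceil + \lceil\frac{\log\log n}{\ddim\log\ddim}\rceil\right)$ drives this below $1/\log n$.

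Next I would plug this oracle into the reduction of \cite{HIM12}, which reduces ANN to $O(\log n)$ near neighbor queries at an arbitrarily small loss in the approximation factor and an $O(\log^2 n)$ blowup in space; by the union bound all these queries succeed simultaneously with positive constant probability, and the resulting approximation ratio is $O(h)$, as claimed. For space: each of the $O(\log n)$ oracles stores $m$ copies of an $n^{1+\delta}$ structure from Corollary~\ref{cor:net}, and with the further $O(\log^2 n)$ factor from the reduction the total is $n^{1+\delta}\cdot\mathrm{poly}(\log n) = n^{1+t}$ for an arbitrarily small constant $t$. For time: each of the $O(\log n)$ near neighbor queries runs $m$ copies at $O(d\log n)$ each plus an $O(md)$ filtering pass, giving $O(d\log^2 n)\cdot m$ overall, which lies within the stated bound (the extra $\log\log n$ factor in the statement provides slack, and in any case would absorb the logarithmic search over scales inside the reduction).

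The step I expect to be the main obstacle is the probability amplification for the doubling-dimension failure term. Unlike $2^{-p}$, which already decreases with $p$ and for which the repetition count $\lceil\frac{\log\log n}{p}\rceil$ is dictated exactly as in Theorem~\ref{thm:p-logn}, the term $\ddim^{-6\ddim}$ is only a fixed constant once $\ddim$ is a constant, so it genuinely needs the additional $\lceil\frac{\log\log n}{\ddim\log\ddim}\rceil$ repetitions to fall below $1/\log n$; combining the two counts as a sum (rather than a max on $\min\{p,\ddim\log\ddim\}$ in the denominator) is harmless up to constants. The other point requiring care is that a \emph{bad} copy may return an arbitrarily distant net-point, which is precisely why the true-distance filtering and $hr$-thresholding step is essential — one cannot simply return the first non-{\em null} answer — and why we must verify that the contraction event of Lemma~\ref{lem:net} certifies that \emph{every} point Indyk can output (not just the intended near neighbor) lies within $hr$ of $q$.
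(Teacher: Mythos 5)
Your proof is correct and takes essentially the same approach the paper intends: the paper's own ``proof'' of this theorem is just the one line ``Similar to the derivation of Theorem~\ref{thm:p-logn}'', i.e.\ amplify Corollary~\ref{cor:net} by independent repetition and plug into the Har-Peled \etal\ reduction, which is exactly what you carry out. Your explicit filtering step (discarding returned net-points whose true $\ell_p$ distance exceeds $hr$) and the observation that the two failure modes $2^{-p}$ and $\ddim^{-6\ddim}$ dictate the two repetition counts $\lceil\frac{\log\log n}{p}\rceil$ and $\lceil\frac{\log\log n}{\ddim\log\ddim}\rceil$ are precisely the details the paper leaves implicit, and you handle them correctly; your derived query time is in fact slightly tighter than the stated $O(d\log^2 n\log\log n)\cdot(\cdots)$, which is harmless.
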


When $p = \Omega(\log \ddim + \log \log d)$, we recover the 
$O(\log \log d)$-approximation of Indyk, and this improves upon the 
$p = \Omega(\log \log n)$ guarantee of the previous section.

\section{ANN for $\ell_p$-space via embedding into $\ell_2$}\label{sec:mazur}
In this section, we show that an embedding from $\ell_p$ ($p>2$) into $\ell_2$
can be used to derived an $2^{O(p)}$-ANN in logarithmic query time.

We review the guarantees of the Mazur map below, and show it can be used 
as an embedding into $\ell_2$. We then solve the ANN problem in the embedded
space. This is however non-trivial, as the map incurs distortion that depends 
on the set diameter, a problem we address below.

\subsection{The Mazur map}
The Mazur map for the real valued vectors is defined as a mapping from $L^m_p$ to $L^m_q$, 
$1 \le q < p \le \infty$. The mapping of vector $v \in L_p$ is defined as
$$M(v) = \{ |v(0)|^{p/q}, |v(1)|^{p/q}, \ldots, |v(m-1)|^{p/q} \},$$ 
where $v(i)$ is the $i$-th coordinate of $v$.
The following theorem introduces a scaled Mazur map, and is adapted from \cite{BL00}.

\begin{theorem}
Let $x,y \in L_p$, $p < \infty$, be vectors such that $\|x\|_p,\|y\|_p \le C$.
The Mazur map for $1 \leq q < p$ scaled down by factor $\frac{p}{q}C^{p/q-1}$
fulfills the following:
\begin{itemize}
\item
Expansion: The mapping is non-expansive.
\item
Contraction: 
$\| M(x) - M(y) \|_q  \ge \frac{q}{p} (2C)^{1-p/q} \|x-y\|_p^{p/q}$.
\end{itemize}
\end{theorem}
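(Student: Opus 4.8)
The plan is to reduce the statement to two one-dimensional (or coordinatewise) inequalities about the scalar Mazur function $t \mapsto |t|^{p/q}$, and then lift them to $L_p$ via the appropriate $\ell_p$/$\ell_q$ summation. Write $\phi(t) = |t|^{p/q}\operatorname{sgn}(t)$ so that $M$ acts coordinatewise by $\phi$ (using the signed version so that the map is genuinely bi-Lipschitz on bounded sets; the unsigned version in the definition differs only in sign bookkeeping and the same estimates go through). The two facts I would isolate are: (i) for $|a|,|b| \le C$, $\bigl||a|^{p/q} - |b|^{p/q}\bigr| \le \tfrac{p}{q} C^{p/q-1} |a-b|$, which is just the mean value theorem applied to $s \mapsto |s|^{p/q}$ on $[-C,C]$, whose derivative has magnitude at most $\tfrac{p}{q}C^{p/q-1}$; and (ii) a reverse estimate of the form $\bigl||a|^{p/q} - |b|^{p/q}\bigr| \ge \tfrac{q}{p}(2C)^{1-p/q}|a-b|^{p/q}$ valid whenever $|a|,|b|\le C$. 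Fact (i) gives the expansion (non-expansiveness) claim essentially immediately after scaling.

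For the expansion bound: by (i), each coordinate of $M(x)-M(y)$ has absolute value at most $\tfrac{p}{q}C^{p/q-1}|x(i)-y(i)|$. Taking $\ell_q$-norms and using that $\|\cdot\|_q \le \|\cdot\|_p$ coordinatewise-dominated vectors obey $\|u\|_q \le \|u\|_p$ when... — actually one must be slightly careful here, since we are comparing an $\ell_q$ norm of the image with an $\ell_p$ norm of the source. The clean route is: $\|M(x)-M(y)\|_q \le \tfrac{p}{q}C^{p/q-1}\,\bigl\|(|x(i)-y(i)|)_i\bigr\|_q$, and since $q<p$ and we are on a (formally) probability-normalized or at least finite index set, one uses the standard fact that the Mazur map's defining exponent makes the $\ell_q$ norm of the differences controllable by $\|x-y\|_p$; the cited source \cite{BL00} handles exactly this normalization, so I would invoke it for the precise passage from coordinatewise bounds to the norm inequality rather than re-deriving it. After dividing by the scaling factor $\tfrac{p}{q}C^{p/q-1}$, non-expansiveness follows.

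For the contraction bound: apply (ii) coordinatewise to get $|M(x)(i)-M(y)(i)| \ge \tfrac{q}{p}(2C)^{1-p/q}|x(i)-y(i)|^{p/q}$. Raise to the $q$-th power and sum over $i$: $\|M(x)-M(y)\|_q^q \ge \bigl(\tfrac{q}{p}(2C)^{1-p/q}\bigr)^q \sum_i |x(i)-y(i)|^{p} = \bigl(\tfrac{q}{p}(2C)^{1-p/q}\bigr)^q \|x-y\|_p^{p}$. Taking $q$-th roots yields $\|M(x)-M(y)\|_q \ge \tfrac{q}{p}(2C)^{1-p/q}\|x-y\|_p^{p/q}$, which is exactly the claimed (unscaled) lower bound; but note this is already the \emph{scaled} statement, because the theorem's contraction bound does not carry the $\tfrac{p}{q}C^{p/q-1}$ factor — so in fact the scaling only affects the expansion side, and on the contraction side the factor $(2C)^{1-p/q}$ already absorbs the normalization. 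I would double-check the exponents by tracking that $\|x\|_p,\|y\|_p\le C$ forces $|x(i)-y(i)|\le 2C$ in each coordinate, which is what licenses the use of (ii) with constant $2C$.

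The main obstacle is establishing the reverse scalar inequality (ii), i.e. the lower bound $\bigl||a|^{p/q}-|b|^{p/q}\bigr| \ge \tfrac{q}{p}(2C)^{1-p/q}|a-b|^{p/q}$ for $|a|,|b|\le C$. The function $|t|^{p/q}$ is convex with exponent $p/q>1$, so it is \emph{not} bi-Lipschitz in the naive sense; one needs the Hölder-type estimate $|a^r - b^r| \ge c_r |a-b|^r$ for $a,b\ge 0$ and $r>1$ on a bounded range, which follows from the elementary inequality $(a+h)^r - a^r \ge$ (something) $\cdot h^r$ — the worst case being $a=0$, where it is an equality $h^r = h^r$ up to the constant, and the constant $\tfrac{q}{p}(2C)^{1-p/q}$ comes from comparing $h^{p/q}$ against the derivative bound at the far end $t=2C$. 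Handling the case of opposite signs ($a>0>b$) requires splitting the difference through $0$ and checking the inequality survives; this sign analysis, plus pinning down the exact constant so it matches $\tfrac{q}{p}(2C)^{1-p/q}$, is where the real care is needed. Everything else is routine summation and an appeal to \cite{BL00} for the norm-level bookkeeping.
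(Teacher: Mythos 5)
The paper does not prove this theorem; it states it with the attribution ``adapted from \cite{BL00}'' and moves on, so there is no paper proof to compare against. Judged on its own merits, your proposal has one genuine error (contraction) and one genuine gap (expansion).

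\textbf{Contraction.} Your scalar inequality (ii) carries the wrong constant, and this is then papered over by the incorrect remark that ``the scaling only affects the expansion side.'' Scaling $M$ down by $\lambda = \tfrac{p}{q}C^{p/q-1}$ is a linear operation; it divides \emph{both} the expansion bound and the contraction bound by $\lambda$. The pointwise inequality you actually want, for the signed map $\phi(t)=\operatorname{sgn}(t)|t|^{r}$ with $r=p/q>1$, is
$$|\phi(a)-\phi(b)| \;\ge\; 2^{1-r}\,|a-b|^{r} \qquad \text{for all } a,b\in\mathbb{R},$$
with no $\tfrac{q}{p}$ and no $C$-truncation. (Same-sign case: for $a\ge b\ge 0$, $a^r-b^r=\int_b^a r t^{r-1}\,dt \ge \int_b^a r(t-b)^{r-1}\,dt=(a-b)^r$. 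Opposite-sign case: $a^r+|b|^r \ge 2^{1-r}(a+|b|)^r$ by convexity of $t\mapsto t^r$.) Raising to the $q$-th power and summing gives the \emph{unscaled} bound $\|M(x)-M(y)\|_q \ge 2^{1-p/q}\|x-y\|_p^{p/q}$; dividing by $\lambda$ then yields exactly the theorem's $\tfrac{q}{p}(2C)^{1-p/q}$. Your version bakes $\tfrac{q}{p}(2C)^{1-p/q}$ into the coordinatewise inequality and then applies no further scaling, so when the scaling \emph{is} applied your derivation would give the strictly weaker bound $\bigl(\tfrac{q}{p}\bigr)^2 C^{1-p/q}(2C)^{1-p/q}$. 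The final number you write down agrees with the theorem only because the weakening of (ii) and the omitted division by $\lambda$ happen to be the same factor $\tfrac{q}{p}C^{1-p/q}$ --- two errors that cancel numerically but are not a proof.

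\textbf{Expansion.} You correctly notice that the coordinatewise MVT bound $|\phi(a)-\phi(b)| \le \tfrac{p}{q}C^{p/q-1}|a-b|$ cannot work as stated: summing $q$-th powers produces $\|x-y\|_q$, and $\|x-y\|_q \ge \|x-y\|_p$ when $q<p$, which points the wrong way. But then you punt to \cite{BL00} rather than fix it, which leaves the claim unproved. The fix is to use the finer pointwise bound $|\phi(a)-\phi(b)| \le \tfrac{p}{q}\max(|a|,|b|)^{p/q-1}|a-b|$, raise to the $q$-th power, and split the sum by H\"older with exponents $p/q$ and $p/(p-q)$; the conjugate factor sums to $\sum_i \max(|x_i|,|y_i|)^{p} \le 2C^p$, and the main factor gives exactly $\|x-y\|_p^q$. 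This is where $\|x\|_p,\|y\|_p\le C$ is actually used (your coordinatewise $|x_i|\le C$ is not the operative input).

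One more caution: as you note, the paper's unsigned definition $M(v)(i)=|v(i)|^{p/q}$ literally identifies $v$ with $-v$, so no nontrivial contraction can hold for it; the signed version is what is meant, and your remark that the two ``differ only in sign bookkeeping'' undersells this --- the sign is essential for the lower bound to be true at all.
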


The scaled Mazur map implies an embedding from $\ell_p$ ($p>2$) into $\ell_2$,
as in the following. (See also \cite[Lemma 7.6]{N14}, a significantly more general 
result.)

\begin{corollary}\label{cor:mazur}
Any subset $V \subset \ell_p$, $p<\infty$ with 
$\|x\|_p \le C$ for all $x \in V$ 
possesses an embedding 
$f:V \rightarrow l_2$ with the following properties for all $x,y \in V$:
\begin{itemize}
\item Expansion: The embedding $f$ is non-expansive.
\item Contraction: For $\|x-y\|_p = u$,
$$\|f(x)-f(y)\|_q \ge \frac{2}{p}(2C)^{1-p/2}  u^{p/2}.$$
\end{itemize}
\end{corollary}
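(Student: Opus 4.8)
The plan is to derive Corollary~\ref{cor:mazur} directly from the preceding scaled Mazur map theorem by specializing $q=2$. First I would invoke the theorem with $q=2$: since every $x \in V$ has $\|x\|_p \le C$, the hypothesis is met, so the Mazur map $M$ from $\ell_p$ to $\ell_2$, scaled down by the factor $\frac{p}{2} C^{p/2-1}$, is non-expansive and satisfies $\|M(x)-M(y)\|_2 \ge \frac{2}{p}(2C)^{1-p/2}\|x-y\|_p^{p/2}$ for all $x,y \in V$. I would then simply set $f := M$ (the scaled map) and observe that the expansion claim is exactly the non-expansiveness statement, and the contraction claim is exactly the inequality above rewritten with $u = \|x-y\|_p$, so that $u^{p/2}$ appears on the right-hand side. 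This gives precisely the two bulleted properties in the corollary statement.

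The only subtlety worth flagging is cosmetic: the corollary as typeset writes $\|f(x)-f(y)\|_q$ in the contraction bound, but with $q=2$ this is $\|f(x)-f(y)\|_2$, which is the intended Euclidean norm on the target $\ell_2$; I would note this identification explicitly so no reader worries about a stray $q$. I would also remark that one is free to compose the scaled Mazur map with an oblivious embedding of $\ell_2$ into $\ell_2^{O(\veps^{-2}\log n)}$ (Johnson--Lindenstrauss), if a bounded target dimension is desired; this changes the constants only by a $(1\pm\veps)$ factor and does not affect the $2^{O(p)}$-type distortion, so it is inessential for the corollary but useful downstream in Section~\ref{sec:mazur}.

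There is no real obstacle here: the corollary is a direct specialization, and the content (the exponential-in-$p$ distortion, hidden in the $(2C)^{1-p/2}$ factor once one relates diameter to distortion) lives entirely in the cited theorem adapted from \cite{BL00}. If I wanted to make the paper self-contained I would instead spend effort proving that theorem --- the key inequalities being the elementary pointwise estimate $\bigl||a|^{p/2}\mathrm{sgn}(a) - |b|^{p/2}\mathrm{sgn}(b)\bigr| \le \frac{p}{2}\max(|a|,|b|)^{p/2-1}|a-b|$ for the expansion direction and the reverse-type bound $\bigl||a|^{p/2}\mathrm{sgn}(a)-|b|^{p/2}\mathrm{sgn}(b)\bigr| \ge c_p \max(|a|,|b|)^{p/2-1}|a-b|$ for the contraction direction, summed over coordinates with H\"older --- but since the theorem is already stated in the excerpt, the proof of the corollary is just the substitution $q=2$ together with the renaming of $\|x-y\|_p$ as $u$.
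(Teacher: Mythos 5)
Your proof is correct and is precisely what the paper intends: the corollary is an immediate specialization of the scaled Mazur map theorem to $q=2$, with $u=\|x-y\|_p$ just renaming the distance. You are also right that the subscript $q$ in the displayed contraction bound is a typo for $2$; the paper offers no further argument, so nothing is missing.
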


\subsection{Nearest neighbor search via the Mazur map.}\label{sec:mazurnns}
Using the Mazur map, we can give an efficient algorithm for ANN for all $p>2$.
Recall that by definition, $\ddim = O(\log n)$. 
First we define the $c$-bounded near neighbor problem ($c$-BNN) for $c>1$ as follows:
For a $d$-dimensional set $V$ for which $\|x\|_p \le c$ for all $x \in S$, 
given a query point $q$:
\begin{itemize}
\item
If there is a point in $V$ within distance 1 of query $q$, return some point in
$V$ within distance $\frac{c}{9}$ of $q$.
\item
If there is no point in $V$ within distance 1 of query $q$, return {\em null} or
some point in $V$ within distance $\frac{c}{9}$ of $q$.
\end{itemize}
(The term $\frac{c}{9}$ was chosen to simplify the calculations below.)

\begin{lemma}\label{lem:bnn}
For $c=p18^{p/2}$, there exists a data structure for the 
$c$-bounded near neighbor problem for $V \subset \ell_p$, $p>2$,
that preprocesses $V$ in time and space $n^{O(1)}$, 
and resolves a query in time
$O(d \log n)$
with probability $1-n^{-O(1)}$.
\end{lemma}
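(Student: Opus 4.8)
The plan is to reduce the $c$-BNN problem for $V\subset\ell_p$ to a near neighbor query in $\ell_2$ via the scaled Mazur map of Corollary \ref{cor:mazur}, and then apply a known $\ell_2$ near-neighbor data structure. First I would observe that since every $x\in V$ satisfies $\|x\|_p\le c$, we may apply the embedding $f:V\to\ell_2$ of Corollary \ref{cor:mazur} with the diameter bound $C=c$ (after possibly also embedding the query point, which we can assume lies in a ball of comparable radius, or else answer {\em null} trivially). The embedding is non-expansive, so any point within $\ell_p$-distance $1$ of $q$ maps to a point within $\ell_2$-distance $1$ of $f(q)$; conversely, by the contraction bound, a point at $\ell_p$-distance $u$ maps to $\ell_2$-distance at least $\frac{2}{p}(2c)^{1-p/2}u^{p/2}$. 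The strategy is then to build an $\ell_2$ near-neighbor structure (e.g.\ that of Indyk--Motwani / Har-Peled \etal, or Kushilevitz--Ostrovsky--Rabani) for radius $r_2=1$ with a constant approximation factor $c_2=O(1)$, embed $q$, and query it.

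\smallskip
The key computation is to check that the constant $c=p\,18^{p/2}$ is chosen exactly so that the $\ell_2$ near-neighbor answer translates back into a point within $\ell_p$-distance $\frac{c}{9}$ of $q$. Concretely: if the $\ell_2$ structure returns a point $f(x)$, it has $\|f(x)-f(q)\|_2\le c_2$, so by the contraction guarantee $\frac{2}{p}(2c)^{1-p/2}\,u^{p/2}\le c_2$ where $u=\|x-q\|_p$, giving
\[
u \;\le\; \Bigl(\tfrac{p\,c_2}{2}(2c)^{p/2-1}\Bigr)^{2/p}
\;=\; (2c)^{1-2/p}\,\bigl(\tfrac{p\,c_2}{2}\bigr)^{2/p}.
\]
Plugging $c=p\,18^{p/2}$ and bounding $\bigl(\tfrac{p c_2}{2}\bigr)^{2/p}$ and the $1-2/p$ exponent crudely should yield $u\le \frac{c}{9}$ for all $p>2$; this is the ``$18$'' and ``$9$'' bookkeeping the authors flagged with their parenthetical remark. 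So the main (and only) real obstacle is verifying this inequality holds uniformly in $p$, and making sure the $\ell_2$ structure's approximation constant $c_2$ is absorbed correctly — I expect one needs $c_2$ to be an absolute constant (not growing with $p$), which standard $\ell_2$ near-neighbor structures provide.

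\smallskip
For the complexity claims: the Mazur map is computed coordinatewise in $O(d)$ time per point, so preprocessing the $n$ embedded points and building the $\ell_2$ near-neighbor structure takes time and space $n^{O(1)}$ (polynomial, with the exponent depending on the chosen $c_2$), and a query embeds $q$ in $O(d)$ time and then performs one $\ell_2$ near-neighbor query in time $O(d\log n)$. The success probability $1-n^{-O(1)}$ comes directly from the $\ell_2$ near-neighbor structure (amplified by independent repetitions if needed); the Mazur map itself is deterministic, so it contributes no failure probability. I would close by noting that $c=p\,18^{p/2}=2^{O(p)}$, which is why this BNN subroutine, once plugged into the specialized constant-aspect-ratio reduction promised in Section \ref{sec:background}, will deliver the $2^{O(p)}$-ANN of Theorem \ref{thm:nns-big}.
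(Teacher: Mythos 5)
Your proof follows essentially the same route as the paper: embed via the scaled Mazur map of Corollary~\ref{cor:mazur}, run a constant-factor $\ell_2$ near-neighbor query on the image, and check that $c = p\,18^{p/2}$ makes the constants close so the answer translates back to $\ell_p$-distance at most $c/9$. The only cosmetic difference is that the paper interposes a Johnson--Lindenstrauss reduction to $O(\log n)$ dimensions (costing another factor of $2$ in contraction) before invoking a $2^{O(\log n)}$-size $2$-approximate structure, whereas you call a polynomial-size high-dimensional $\ell_2$ near-neighbor structure directly; your bookkeeping inequality, which you left unfinished, does go through --- with $c_2=2$ one gets $u \le 2^{1-2/p}\,p\,18^{p/2-1} \le 2p\,18^{p/2-1} = c/9$, and the paper's extra JL factor corresponds to taking $c_2=4$, for which the bound is met with equality.
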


\begin{proof}
The points are preprocessed by first applying the scaled Mazur map to embed $V$ 
into $\ell_2$ in time $O(dn)$. We then use the Johnson-Lindenstrauss (JL) transform \cite{JL84}
(or the fast JL transform \cite{AC06}) to reduce dimension to
$d' = O(\log n)$ 
with no expansion and contraction less than $\frac{1}{2}$, in time 
$O(dn \log n)$.
On the new space, we construct a data structure of size 
$2^{O(d')} = n^{O(1)}$ 
supporting Euclidean 2-approximate near neighbor queries in 
$O(d' \log n)$ 
time per query
\cite{HIM12,KOR98}.

Given a query point $q$, we apply the Mazur map and JL transform on the new point in time $O(d \log n)$, 
and use the resulting vector as a query for the $2$-approximate near neighbor algorithm on the
embedded space in time $O(d' \log n)$.
If the point $x$ returned by this search satisfies $\|q-x \|_p \le \frac{c}{9}$ then we return it, 
and otherwise we return {\em null}.

To show correctness: The Mazur map is non-expansive, as is the JL transform 
(which is correct with probability $1-n^{-O(1)}$). By Corollary \ref{cor:mazur},
the Mazur map ensures that inter-point distances of $\frac{c}{9}$ or greater map to at least 
$\frac{2}{p} (2c)^{1-p/2}(c/9)^{p/2} 
= \frac{2}{p} 2c 18^{-p/2} 
= \frac{2}{p} 2(p18^{p/2}) 18^{-p/2} 
= 4$, 
and then the contraction guarantee of the JL-transform implies that the distance in the embedded
Euclidean space is greater than 2. It follows that if $q$ possesses a neighbor in the original space
at distance 1 or less, the $2$-ANN in the embedded Euclidean space finds a neighbor at 
distance 2 in the embedded space and less than $\frac{c}{9}$ in the origin space. 
\end{proof}

We will show that an oracle solving $c$-BNN can be used as a subroutine for a data structure solving the 
$c$-approximate nearest neighbor problem. This parallels the classical reduction from ANN to 
the near neighbor problem utilized above. However, in order to minimize 
the distortion introduced by the Mazur map we must restrict the oracle to bounded diameter sets, 
and this results in a different reduction, adapted from \cite{KL04}.

\begin{theorem}\label{thm:nns-big} 
Let $V$ and $c$ be as in Lemma \ref{lem:bnn}. 
There exists a data structure for the $6c = 6p18^{p/2}$-ANN problem on $V$, which preprocesses 
$\min \{2^{O(\ddim)}n, O(n^2)\} 
\cdot 
\Big \lceil \frac{\log \log d}{p \ddim } \Big \rceil$
separate $c$-BNN oracles, 
each for a subset of $V$ of size at most 
$w = \min\{ c^{O(\ddim(S))}, O(n) \}$, in total time and space 
$O(w n d \log \log d)$. 
The structure resolves a query with constant probability or correctness by executing 
$$O \left(\log d \cdot \Big \lceil \frac{\log \log d}{p \ddim } \Big \rceil \right)$$ 
$c$-BNN oracle invocations, in total time 
$$O \left( 
d^2 \log n + d \log d \cdot p 
\ddim \cdot \Big \lceil \frac{\log \log d}{p \ddim } \Big \rceil 
\right).$$ 
\end{theorem}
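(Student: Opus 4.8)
The plan is to mimic the classical ANN-to-near-neighbor reduction, but replace the ``test at all scales'' idea with a navigation through a net-hierarchy (in the style of Krauthgamer--Lee \cite{KL04}), so that every oracle call is made only on a low-diameter subproblem where the Mazur map's diameter-dependent distortion is harmless. First I would build a compressed hierarchy (net-tree) for $V$ in time and space $\min\{2^{O(\ddim)}n, O(n^2)\}$, maintaining for each net point $p\in S_i$ links to all other net points of $S_i$ within distance $\Theta(c)\cdot 2^i$; by Lemma~\ref{lem:pack} there are $w=\min\{c^{O(\ddim)}, O(n)\}$ such neighbors. For each scale $i$ and each net point $p\in S_i$, I would form the local subset consisting of $p$ together with its $\Theta(c)\cdot 2^i$-neighbors, rescale it by $1/(2^i)$ so that it has bounded norm $O(c)$, and build a $c$-BNN oracle of Lemma~\ref{lem:bnn} on it. The number of such oracles is $O(n)$ per scale and $\min\{2^{O(\ddim)}n, O(n^2)\}$ overall, each on a set of size $\le w$; to boost the failure probability so that a whole root-to-leaf search succeeds with constant probability, I replicate each oracle $\lceil (\log\log d)/(p\ddim)\rceil$ times, which gives the stated preprocessing bounds $O(wnd\log\log d)$.

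For the query, given $q$, I would descend the net-tree from the root: maintaining at scale $i$ a candidate net point $x_i$ that is (approximately) closest to $q$ among $S_i$, I examine $x_i$'s $\Theta(c)\cdot 2^i$-neighbor list at the next finer scale, rescale $q$ and this neighbor list by $1/2^i$, and invoke the $c$-BNN oracle stored there. If the oracle reports a point, it is within distance $(c/9)\cdot 2^i$ of $q$ in the original metric; this becomes $x_{i-1}$ and we continue. If it returns {\em null}, the search has bottomed out and we return the last point found. The key geometric invariant to establish is that if $q$ has a true nearest neighbor $v^*$ at distance $r^*$, then once $2^i$ drops to within a constant factor of $r^*$ the relevant ancestor of $v^*$ lies in the neighbor list being queried and lies within distance $\le 2^i$ of $q$, so the BNN oracle's ``point within distance $1$'' precondition is met (after rescaling) and it must return something within $(c/9)\cdot 2^i = O(c)\cdot r^*$; summing the geometric tail of distances traversed after that scale gives a final approximation factor of $6c$. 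The number of oracle calls is $O(\log d)$ scales (the height of the compressed tree is $O(\min\{n,\log d\})$; here the aspect ratio can be taken $\mathrm{poly}(d)$ after an initial dimension-independent reduction, or one uses the $O(\log n)$-height variant --- I would cite \cite{HM06,CG06} and absorb this), times the $\lceil(\log\log d)/(p\ddim)\rceil$ replication factor; each call costs $O(d\log n)$ by Lemma~\ref{lem:bnn} plus $O(d\cdot w_{\text{list}})$ to filter the neighbor list where $w_{\text{list}} = c^{O(\ddim)}$, and the $d^2\log n$ term in the runtime comes from the single $O(dn)$-to-$O(d\log n)$ dimension-reduction bookkeeping or from an initial brute-force-free coarse localization; I would verify that the arithmetic collapses to the claimed $O(d^2\log n + d\log d\cdot p\ddim\cdot\lceil(\log\log d)/(p\ddim)\rceil)$.

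The main obstacle I anticipate is making the hierarchical search \emph{robust to the one-sided, approximate, and randomized} nature of the $c$-BNN oracle. Unlike an exact near-neighbor test, the $c$-BNN oracle may return a point as far as $c/9$ times the query radius, so the candidate $x_i$ we carry down is only a \emph{coarse} approximate nearest point at scale $i$, and I must argue that this coarseness does not cause the search to ``lose track'' of $v^*$'s ancestors --- i.e., that the $\Theta(c)$-radius neighbor lists are wide enough to guarantee that $v^*$'s ancestor stays reachable from $x_i$ at every scale, which is precisely why the neighbor radius must scale with $c$ (and hence the per-list size $c^{O(\ddim)}$ and the factor $6c$ rather than $2c$). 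Handling the {\em null} case correctly --- distinguishing ``$q$ genuinely has no close point at this scale, stop'' from ``the oracle failed'' --- is where the $\lceil(\log\log d)/(p\ddim)\rceil$-fold amplification is needed, and I would make sure a union bound over all $O(\log d\cdot\lceil(\log\log d)/(p\ddim)\rceil)$ calls on the search path still leaves constant success probability, exploiting that each amplified oracle fails with probability $n^{-\Omega(1)}$, comfortably small against the path length.
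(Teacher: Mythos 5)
Your overall strategy — build a net hierarchy, navigate it top-down in the style of Krauthgamer--Lee, and at each level query a rescaled $c$-BNN oracle on a local neighbor set — is essentially the same as the paper's. But you have two genuine gaps, and one muddled argument, that would prevent the proof from closing.

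First, you do not identify where the $d^2\log n$ term and the $O(\log d)$ bound on the number of levels actually come from. You hedge (``initial brute-force-free coarse localization,'' or ``aspect ratio can be taken $\mathrm{poly}(d)$''), but neither of those works: the hierarchy height is $O(\min\{n,\log\Delta\})$, and there is no reason the aspect ratio is $\mathrm{poly}(d)$. What the paper does is first run Chan's deterministic algorithm to get an $O(d^{3/2})$-ANN $q'$ in time $O(d^2\log n)$, then find the ancestor of $q'$ at level $i=\lceil\log\|q-q'\|_p\rceil$ and start the descent \emph{there}. Since $q'$ is within $d^{3/2}$ of optimal, after descending $O(\log d)$ levels the current scale drops below the true nearest-neighbor distance and the search must terminate. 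Without this warm-start, the descent length is unbounded by $\log d$ and the runtime bound fails.

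Second, you build each local $c$-BNN set from a \emph{single} level of the hierarchy ($p$ together with its $\Theta(c)\cdot 2^i$-neighbors). This is insufficient for the \emph{null} case: the $c$-BNN oracle's diameter spans a factor of $c$, and when it returns \emph{null} you must conclude that no point of $V$ (not just no net point of $S_{i-1}$) lies within $\approx 2^i/c$ of $q$. The paper therefore includes in $N(t,i)$ not only the nearby $S_{i-1}$ net points but also all of their descendants down to level $S_k$ with $k=i-\lceil\log 4c\rceil$; only then can one bound the distance from $q$ to \emph{every} point via the $2\cdot 2^k$ net-tree covering radius and conclude $t$ is a $6c$-ANN. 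Your ``sum the geometric tail'' argument glosses over this and does not yield a clean approximation factor.

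Finally, your amplification argument is inverted. You write that ``each amplified oracle fails with probability $n^{-\Omega(1)}$, comfortably small'' — but if that were already the per-call failure probability, no amplification would be needed at all. The point is that Lemma~\ref{lem:bnn}'s failure probability is polynomial in the size of the \emph{oracle's own set}, which is only $2^{O(p\ddim)}$; hence the raw failure probability is $2^{-O(p\ddim)}$, which need not be $o(1/\log d)$. That is exactly why each oracle is replicated $\lceil(\log\log d)/(p\ddim)\rceil$ times — to push the per-call failure probability to $O(1/\log d)$ so that a union bound over the $O(\log d)$ calls on the descent path succeeds with constant probability.
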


\begin{proof}
We preprocess a hierarchy and net-tree for $V$.
Given query point $q$, we will seek a hierarchical point $w \in S_j$ 
which satisfies $\|w-q\|_p \le 3 \cdot 2^j$, for minimal $j$. 
Note that if such a point $w$ exists, then every hierarchical ancestor
$w' \in S_i$ $(i>j)$ of $w$ also satisfies $\|w'-q\|_p \le 3 \cdot 2^i$:
We have $\|w'-w\|_p \le 2 \cdot 2^i - 2 \cdot 2^j$, and so
$$
\| w' - q \|_p 
\le \| w'-w \|_p + \| w - q \|_p 
< (2 \cdot 2^i - 2 \cdot 2^j) + 3 \cdot 2^j
< 3 \cdot 2^i.
$$

To find $w$, we modify the navigating-net algorithm
of Krauthgamer and Lee \cite{KL04}: Beginning with 
the root point at the top level of the hierarchy, we descend down
the levels of the hierarchy, while maintaining at each level $S_i$ 
a single point of interest $t \in S_i$ satisfying $d(t,q) \le 3 \cdot 2^i$.  

Let $t \in S_i$ be the point of interest in level $S_i$.
Then the next point of interest $t' \in S_{i-1}$ satisfies
$\|t' - t\|_p \le \| t' - q \|_p + \|q-t\|_p \le 3 \cdot 2^i + 3\cdot 2^{i-1} = 4.5 \cdot 2^i$.
So to find $t'$ it suffices to search all points of $S_{i-1}$ 
within distance $4.5 \cdot 2^i$ of $t$. But there may be $2^{\Theta(\ddim)}$ such points,
and so we cannot afford a brute-force search on the set.
Instead, we utilize the $c$-BNN data structure of Lemma \ref{lem:bnn}:
For each net-point $t \in S_i$, preprocess a set $N(t,i)$ that includes all these candidate
points of $S_{i-1}$, as well as all their descendants in the hierarchical level 
$S_k$, $k = i- \lceil \log 4c \rceil$ within distance $4.5 \cdot 2^i$ of $t$. 
After translating $N(t,i)$ so that $t$ is the origin,
and scaling so that all points have magnitude at most $c$, 
we preprocess for $N(t,i)$ the $c$-BNN oracle of Lemma \ref{lem:bnn}.
(Note that $|N(t,i)| = c^{O(\ddim)} = 2^{O(p \ddim)}$.)
To find $t'$, execute a $c$-BNN query on $N(t,i)$ and $q$, and if the query 
returns a point $q'$, then set $t'$ to be the ancestor of $q'$ in $S_{i-1}$:
$$
\| t' - q \|_p 
\le \| t' - q' \|_p + \| q' - q \|_p 
< 2 \cdot 2^{i-1} + \frac{4.5 \cdot 2^i}{9}
= 3 \cdot 2^{i-1}.
$$

We terminate the algorithm in any of three events:
\begin{itemize}
\item
The root $t \in S_i$ does not satisfy $\|q-t\|_p \le 3 \cdot 2^i$.
In this case the root itself is at worst a 3-ANN of $q$, 
as the distance from all descendants of the root (that is,
all points) to $q$ is greater than 
$\|q-t\|_p - 2 \cdot 2^i$.
\item 
The algorithm locates a point $w \in S_0$ satisfying $\|w-q\| \le 3$.
Then the nearest neighbor of $q$ can be either $w$ or a point within
distance 6 of $w$. We execute a $c$-BNN query on the set of points
within distance $6$ of $w$. (That is, as above we translate $w$ to the 
origin, scale so that the maximum magnitude is $c$, and 
preprocess a query structure.)
If the query returns {\em null}, then there
is no point within distance $\frac{6}{c}$ of $q$, and so $w$ is a
$\frac{3}{6/c} = \frac{c}{2}$-ANN of $q$. If the query return some point, then that
point is within distance $\frac{6}{9} = \frac{2}{3}$ of $q$. Since the
minimum inter-point distance of the set is 1, there cannot be another point within 
distance $\frac{1}{3}$ of $q$, and so the returned point is a 2-ANN of $q$. 
\item 
A $c$-BNN query on some set $N(t,i)$ returns {\em null}.
As the diameter of $N(t,i)$ is at least $2^i$, this implies
that there is no point in $N(t,i)$ within distance 
$\frac{2^i}{c}$ of $q$. As all descendants of a point in 
$S_k$ ($k=i-\lceil \log 4c \rceil$) are within distance
$2 \cdot 2^k$ of their ancestor, the distance from $q$ to
all other points is at least 
$\frac{2^i}{c} - 2 \cdot 2^k 
= \frac{2^i}{c} - 2\frac{2^i}{4c} 
= \frac{2^i}{2c}$. 
It follows that $t$ is a 
$\frac{3 \cdot 2^i}{2^i/2c} = 6c$-ANN of $q$.
\end{itemize}

We conclude that the above algorithm returns the nearest neighbor stipulated by
the lemma. However, its runtime depends on the number of 
levels in the hierarchy, that is $O(\min \{ n,\Delta \})$. To remove this dependence, 
we first invoke the algorithm of Chan \cite{Ch98} to find in time 
$O(d^2 \log n)$ and high probability a $O(d^{3/2})$-ANN of $q$, called $q'$. 
We then locate the ancestor of $q'$ in level
$S_i$, $i = \lceil \log \|q-q'\|_p \rceil$, of the hierarchy in time $O(\log n)$,
which can be done easily using standard tree decomposition algorithms.
We assign this ancestor as our first point of interest $t$; note that we have
$$\|t-q\|_p 
\le \|t-q'\|_p + \|q'-q\|_p
< 2 \cdot 2^i + 2^i
= 3 \cdot 2^i,$$
so indeed $t$ is a valid point of interest. 
After descending $O(\log d)$ levels in the 
search, we reach radii that are smaller than the true distance from $q$ to its
nearest neighbor in $V$, and the search must terminate.

In order for the entire procedure to succeed with constant probability, 
we require the failure probability of each level $c$-BNN query to be
$O(1/\log d)$. Each $c$-BNN oracle consists of 
$O \left( \Big \lceil \frac{\log \log d}{p \ddim } \Big \rceil \right)$ 
structures of Lemma \ref{lem:bnn}, and (recalling that a query is executed
on a set of size $2^{O(p \ddim)}$) the probability that they all fail
simultaneously is 
$2^{-O(p \ddim \cdot (\log \log d)/ p \ddim)} = O(1/\log d)$.
The final space and runtime follow directly from the time and space required for 
building and querying the $c$-BNN oracles, each of size $\min \{ O(n), 2^{O(p \ddim)} \}$, 
plus the single level ancestor query.
\end{proof}

\paragraph{Comment.}
Note that the query time in linear in the doubling dimension, as opposed to the
exponential dependence common to ANN for doubling metrics.
We can extend this lemma by noting that once a $2^{O(p)}$-ANN is found, we can run
the standard navigating net algorithm to descend $O(p)$ additional levels and locate
a constant-factor ANN in time $p2^{O(\ddim)}$. We then search
$\veps^{-O(\ddim)}$ more points in a brute-force fashion and locate a $(1+\veps)$-ANN. 
So a $(1+\veps)$-ANN can be found with $p2^{O(\ddim)} + \veps^{-O(\ddim)}$
additional work.

\paragraph{Acknowledgements.}
We thank Sariel Har-Peled, Piotr Indyk, Robi Krauthgamer, 
Assaf Naor and Gideon Schechtman for helpful conversations.

\bibliography{bib-nns}

\end{document}